\newtheorem{theorem}{Theorem}
\newtheorem{property}{Property}
\newtheorem{lemma}{Lemma}
\newtheorem{remark}{Remark}
\def\floor#1{\lfloor #1 \rfloor}
\newcommand\numberthis{\addtocounter{equation}{1}\tag{\theequation}}
\newcommand{\C}{\msf{C}}
\newcommand{\mcal}{\mathcal}
\newcommand{\mb}{\mathbf}
\newcommand{\mbb}{\mathbb}
\newcommand{\msf}{\mathsf}
\newcommand{\RN}[1]{%
      \textup{\uppercase\expandafter{\romannumeral#1}}%
  }
\title{Wireless Network Simplification:\\The Performance of Routing}
\author{
    \IEEEauthorblockN{Yahya H. Ezzeldin, Ayan Sengupta, Christina Fragouli} 
\thanks{
    Y.~H.~Ezzeldin and C.~Fragouli are with the Electrical Engineering Department at the University of California, Los Angeles, CA 90095 USA (e-mail: \{yahya.ezzeldin, christina.fragouli\}@ucla.edu). 
    The research carried out at UCLA was partially funded by NSF under award number 1514531. A. Sengupta is with the Electrical Engineering Department at Stanford University, Stanford, CA 94305 (e-mail: ayans@stanford.edu) and is supported by SNSF Early Postdoc Mobility Fellowship.  

The results in this paper were presented in part at the 2016 IEEE International Symposium on Information Theory.
}

}
\begin{document}
\maketitle
\begin{abstract}
  Consider a wireless Gaussian network where a source wishes to communicate with a destination with the help of N full-duplex  relay nodes.  Most practical systems today route information from the source to the destination using the best path that connects them. In this paper, we show that routing can in the worst case result in an unbounded gap from the network capacity - or reversely, physical layer cooperation can offer unbounded gains over routing.  More specifically, we show that for  $N$-relay Gaussian networks with an arbitrary topology, routing can in the worst case guarantee an approximate fraction $\frac{1}{\floor{N/2} + 1}$ of the capacity of the full network, independently of the SNR regime.  We prove that this guarantee is fundamental, i.e., it is the highest worst-case guarantee that we can provide for routing in relay networks.   Next, we consider how these guarantees are refined for  Gaussian layered relay networks with $L$ layers and $N_L$ relays per layer.   We prove that for arbitrary $L$ and $N_L$, there always exists a route in the network that approximately achieves at least $\frac{2}{(L-1)N_L + 4}$ $\left(\mbox{resp.}\frac{2}{LN_L+2}\right)$ of the network capacity for odd $L$ (resp. even $L$), and there exist networks where the best routes exactly achieve these fractions.
 These results are formulated within the network simplification framework, that asks what fraction of the capacity we can achieve by using a subnetwork (in our case, a single path). A fundamental step in our proof is a simplification result for  MIMO antenna selection that may also be of independent interest. To the best of our knowledge, this is the first result that characterizes, for general wireless network topologies, what is the performance of routing with respect to physical layer cooperation techniques that approximately achieve the network capacity.

\end{abstract}

\IEEEpeerreviewmaketitle
\section{Introduction} \label{sec:intro}

Consider a wireless Gaussian network where a source wishes to communicate with a destination using the help of wireless full-duplex relay nodes.
Work in information theory has shown that we can approximately achieve the network capacity by using physical layer cooperation schemes \cite{ADT2011,NNC}; there has also been an increasing interest in the community to translate these schemes towards practical networks \cite{duarte2013quantize,brahma2014quilt,sengupta2016consistency}. 

Currently, the widespread approach in practical networks is to route information from the
source to the destination through a single multi-hop path, consisting of successive point-to-point transmissions. {Routing} is considered an appealing option since it has low complexity, provides  energy savings (by only powering the network nodes belonging to the selected route), and creates limited network interference - as a result, there is a rich literature on how to use routing for wireless network applications \cite{AODV}, \cite{DSR},\cite{OLSR},\cite{DSDV}. However, even if we select to route along the best (highest capacity) path that connects the source to the destination, we could be significantly  under-utilizing the available network capacity.

In this paper we ask, given an arbitrary wireless network, how does the capacity of the best path (achieved by routing) compare to the network capacity achieved by optimally using physical layer cooperation over all the available relays.
Answering this question can help motivate the use of physical layer cooperation,
by better understanding where we expect significant benefits as compared to routing. Moreover, this work offers new results  within the network simplification framework, that asks what fraction of the capacity we can achieve by using a subnetwork (in our case, a single path).


We prove that routing can in the worst case result in a fraction of the network capacity that decreases with the number of nodes in the network, independently of SNR -  or reversely, physical
layer cooperation can offer gains over routing that grow linearly with the number of nodes in the network. 
In particular, we prove that for any $N$-relay Gaussian network, there always exists a route in the networks that achieves $\frac{1}{\floor{N/2}+1}$ of the approximate capacity of the full network. 
Moreover, we provide networks for which this fraction guarantee is tight, proving that the bound of $\frac{1}{\floor{N/2} +1}$ is the best worst case fraction guarantee for the achievable rate of routing.
This is a surprising result when put in contrast with the result in \cite{NOF_Simplification} which shows that, if we select the best route over a diamond N-relay network, we can always approximately achieve $\frac{1}{2}$ of the network capacity, independently of the number $N$ of relay nodes.
This suggests that the independence of the number of nodes in the guarantee might be a property of the diamond network.

To further understand this, we consider the capacity fraction guarantee when
we have a layered Gaussian relay network with $L$ layers and $N_L$ relays per layer
(the diamond network corresponds to the case of a single layer).
       We prove that there always exists a path in the network (by selecting one relay per layer) that can achieve $\frac{2}{(L-1)N_L + 4}$ (resp. $\frac{2}{LN_L+2}$) of the approximate capacity of the full network for odd $L$ (resp. even $L$).        We also prove that there exist networks where the best path achieves this bound.         This  result, refined for layered networks, admits the result in \cite{NOF_Simplification}  as a special case by setting $L=1$. The main intuition is that for $L\geq2$ subsequent layers act as MIMO channels, an effect not captured for $L=1$. 

    At the heart of our proofs, we deal with the problem of analyzing how subsets (in terms of antennas) of a MIMO channel behave with respect to the entire MIMO channel. 
        We therefore prove the following subsidiary simplification result on the MIMO channel which might be of independent interest: for every $n_t \times n_r$ Gaussian MIMO channel with i.i.d inputs, the best $k_t \times k_r$ subchannel approximately achieves a fraction $\frac{\min\{k_t, k_r\}}{\min\{n_t, n_r\}}$ of the full MIMO capacity, universally over all channel configurations.

\subsection{Related Work}
For the Gaussian full-duplex relay networks, the capacity is not known in general. The tightest known universal upper bound for the capacity is the information theoretic cut-set upper bound. In \cite{ADT2011}, the authors showed that the network can achieve a rate that is a constant gap away from the cut-set upper bound through the quantize-map-and-forward relaying strategy. 
Similar strategies \cite{NNC}, \cite{DDF} have been shown to achieve a similar result. 
For all these strategies, the gap is linear in the number of nodes $N$ in the network.  Although for several network topologies, it has been shown that the gap is sublinear \cite{KOEG_logGap, SWF_logGap}, \cite{CO_logGap}. 
It has been shown through the results in \cite{CO_sublineargap} and \cite{WO2015}, that a linear gap to the cut-set bound is indeed fundamental for the class of Gaussian full-duplex relay networks.

In the thread of work on wireless network simplification, \cite{NOF_Simplification} studied the problem for the Gaussian full-duplex diamond network. The authors in \cite{NOF_Simplification} provided universal capacity guarantees for selecting $k$-relay subnetworks, where they show that selecting $k$ out of $N$ relays in the network is always guaranteed to achieve at least $\frac{k}{k+1}$ of the full network capacity, to within a constant gap.  When applied to a single route selection, \cite{NOF_Simplification} states that a route in a diamond relay network can always approximately achieve half of the capacity of the capacity of the diamond network. The work in \cite{Javad_Simplification} extended the result in \cite{NOF_Simplification} for some scenarios of the Gaussian FD diamond network with multiple antennas at the nodes. 
The network simplification problem have also been studied recently in \cite{CEFT_submodularity} for Gaussian half-duplex diamond relay networks, where the authors showed we can always select $N-1$ relays and approximately achieve $\frac{N-1}{N}$ of the Gaussian half-duplex relay network capacity.
As a scheme-specific performance guarantee (as opposed to guaranteeing capacity fractions), the work of \cite{agnihotri12} proved upper bounds on multiplicative and additive gaps for relay selection based on the amplify-and-forward scheme, primarily for diamond full-duplex networks. 
In \cite{agrawal2016}, the authors characterized the performance of network simplification (in terms of achievable secure capacity) for layered full-duplex relay networks operating using amplify-and-forward  in the presence of an eavesdropper.

Another thread of related work pertains to algorithm design for finding near-optimal subnetworks. \cite{BSF_Selection} and \cite{Ozgur_Selection} made progress in that direction, by providing low-complexity heuristic algorithms for near-optimal relay selection. The work of \cite{ bletsas2006}, \cite{tannious08} proposed algorithms for only selecting the best route (in terms of cooperative diversity) in one-layer networks. As far as we know, this is the first work that theoretical proves worst case performance guarantees for the capacity of the best path over an arbitrary wireless network.

\subsection{Paper Organization}
The paper is organized as follows.
Section~\ref{sec:model} describes the $N$-relay Gaussian FD network and its approximate capacity expression.
Section~\ref{sec:model} also introduces notation  that will be in the remainder of the paper.
Section~\ref{sec:main_results} discusses the main results in the paper and compares th different guarantees on the achievable rate.
Section~\ref{sec:main_results_mimo} derives a simplification result for MIMO channels with i.i.d inputs which is a key ingredient in the proof of our main results.
Section~\ref{sec:main_results_general} proves the universal guarantee on the achievable rate by the best route in a Gaussian FD relay network, in terms of a fraction of the full network capacity.
In section~\ref{sec:main_results_layered}, a refined guarantee for Gaussian FD layered networks is proved.
Section~\ref{sec:conclusion} concludes the discussion in the paper. 
Some parts of the proofs are delegated to the Appendices.

\section{System Model and Preliminaries} \label{sec:model}
Throughout the paper, we denote with $[a\!:\!b]$ the set of integers from $a$ to $b$, where $b\geq a$.
We consider a Gaussian relay network where the \textit{Source} ($S$) wants to communicate with the \textit{Destination} ($D$) through the help of $N$ relays operating in full-duplex.
The set of all nodes in the network is denoted by $\mcal{V}$.
Nodes in $\mcal{V}$ are indexed with the integers $[0\!:\!N+1]$ where the Source and Destination are indexed by 0 and $N+1$, respectively.

At any time $t$, the received signal $Y_j[t]$ at node $j$ is a function of the transmitted signals from all other nodes in the network (except $D$),
\begin{align}
    \label{ntwk_model}
    Y_j[t] = \sum_{\substack{i=0,\\i \neq j}}^{N} h_{ij} X_i[t] + W_{j}[t], \quad \forall j \in [1:N+1],
\end{align}
where: (i) $X_i$ is the transmitted signal from the $i$-th node; (ii) the additive white Gaussian noise $W_{j} \sim \mcal{CN}(0,1)$ at $j$-th node is independent of the inputs, as well as of the noise terms at the other nodes; (iii)
the (complex) channel gain between nodes $i$ and $j$ is denoted by $h_{ij} \in \mbb{C}$. Transmitted signals from each network node satisfy an average power constraint $\mathbb{E}[|X_i|^2]\leq 1 \quad \forall i \in [0:N]$.
%

The exact capacity $\C$ of the network described in \eqref{ntwk_model} is not known in general. However, in \cite{ADT2011} the authors prove that it is within a constant gap\footnote{By constant gap, we refer to terms that are independent of the channel coefficients in the network.} from the cutset upper bound evaluated with i.i.d Gaussian input distributions, given by
\begin{equation}\label{eq:C_bar_expression}
    \widebar{\C} \triangleq \min_{\Omega \in 2^{\mcal{V}}} \widebar{\C}(\Omega,\mcal{V}),
\end{equation}
  {where}
  \begin{equation}
  \label{eq:cut_value}
    \widebar{\C}(\Omega,\mcal{V}) \triangleq \log\text{det}\left( \mb{I} + \mb{H}_\Omega {\mb{H}_\Omega}^\dagger \right). \numberthis
\end{equation}
The matrix $\mb{H}_\Omega$ represents a MIMO channel matrix from transmitting nodes in $\Omega$ to receiving nodes in $\Omega^c = \mcal{V}\backslash \Omega$. 
We refer to $\Omega \subseteq \mcal{V}$ as a \emph{``cut''} in the network. 
In the rest of the paper, we work with the approximate capacity $\widebar{\C}$ in place of the network capacity to prove our results.

In a $N$-relay Gaussian network, we denote the capacity of the point-to-point channel between node $i$ and node $j$ as
\begin{align*}
    R_{i\to j} \triangleq \log\left(1 + |h_{ij} |^2\right),\qquad \forall i,j \in[0:N+1].
\end{align*}
A path (route) $\mcal{P}$ in an $N$-relay Gaussian FD network is defined by a sequence of $|\mcal{P}|+1$ non-repeating nodes $\{v_0,v_1,\dots,v_{|\mcal{P}|}\}$, where $v_0 = 0$, $v_{|\mcal{P}|} = N+1$ and $v_i \in [1:N], \forall i \in [1\!:\!|\mcal{P}|-1]$.
The path $\mcal{P}$, therefore, defines a line network from $S$ to $D$ induced by the links connecting nodes $v_{i-1}$ and $v_{i}$ for $i \in [1:|\mcal{P}|]$. 
The capacity of the path $\mcal{P}$ is denoted by $\C_{\mcal{P}}$ and is known to be equal to 
\begin{align}
    \label{eq:cap_path}
    \C_{\mcal{P}} = \min_{0 \leq i \leq | \mcal{P} |-1} R_{v_i\to v_{i+1}},
\end{align}
and can be achieved through the Decode-And-Forward scheme \cite{NIT_book}.

\section{Main Results}\label{sec:main_results}
The main results of this paper are summarized in the following theorems.

\begin{theorem}
    In any $N$-relay Gaussian network with approximate capacity $\widebar{\C}$, there exists a path $\mcal{P}$ (line network) such that the  capacity $\C_\mcal{P}$ of the path satisfies
    \begin{equation}\label{thm:single_path_low_bound}
        \C_\mcal{P} \geq \frac{1}{\floor{N / 2} + 1} \widebar{\C} - 2\log\left(\frac{N+2}{2}\right). 
    \end{equation}
    Moreover, there exists a class of networks with $N$ relays such that for all paths $\mcal{P}$,
    \begin{align}\label{thm:single_path_tight_bound}
        \C_\mcal{P} \leq \frac{1}{\floor{N / 2} + 1} \widebar{\C}.
    \end{align}
    \label{thm:single_path}
\end{theorem}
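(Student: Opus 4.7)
The plan is to prove the achievability bound in \eqref{thm:single_path_low_bound} via a contradiction-style cut argument that invokes the $k_t=k_r=1$ case of the MIMO simplification result of Section~\ref{sec:main_results_mimo}, and to establish the tightness bound in \eqref{thm:single_path_tight_bound} by constructing a ``bottleneck-cut'' family of networks in which every $S$-$D$ path is forced to traverse a single link across a scaled-orthogonal MIMO channel.

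For the lower bound, let $\tau^\star$ denote the capacity of the best $S$-to-$D$ path and fix an arbitrary $\tau > \tau^\star$. Then no $S$-$D$ path survives in the subgraph keeping only links of capacity at least $\tau$, so there exists a cut $(\Omega,\Omega^c)$ with $S\in\Omega$, $D\in\Omega^c$, and $|h_{uv}|^2 < 2^\tau-1$ for every $u\in\Omega$, $v\in\Omega^c$. Applying the MIMO simplification result to the cross-cut channel $\mb{H}_\Omega$ (equivalently, using $\log\det(\mb{I}+\mb{H}\mb{H}^\dagger)=\sum_i\log(1+\sigma_i^2)\leq\min(n_t,n_r)\log(1+\|\mb{H}\|_F^2)$ since at most $\min(n_t,n_r)$ singular values are nonzero and each satisfies $\sigma_i^2\leq\|\mb{H}\|_F^2$), I obtain
\[
\widebar{\C}(\Omega,\mcal{V}) \leq \min(|\Omega|,|\Omega^c|)\log\lp 1 + |\Omega|\,|\Omega^c|(2^\tau-1)\rp \leq (\floor{N/2}+1)\lp \tau + 2\log((N+2)/2)\rp,
\]
where I used $\min(|\Omega|,|\Omega^c|)\leq\floor{(N+2)/2}=\floor{N/2}+1$ and $|\Omega|\,|\Omega^c|\leq((N+2)/2)^2$ by AM--GM on $|\Omega|+|\Omega^c|=N+2$. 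Since $\widebar{\C}\leq\widebar{\C}(\Omega,\mcal{V})$, letting $\tau\downarrow\tau^\star$ delivers~\eqref{thm:single_path_low_bound}.

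For the tightness statement~\eqref{thm:single_path_tight_bound}, I would design a family of networks parameterized by a scalar $K>0$. Partition the relays into groups of sizes $\floor{N/2}$ and $\lceil N/2\rceil$, forming $\Omega$ by adjoining $S$ to the first group (so $|\Omega|=\floor{N/2}+1$) and $\Omega^c$ by adjoining $D$ to the second. Internally, place arbitrarily-high-capacity edges from $S$ to each relay in $\Omega$ and from each relay in $\Omega^c$ to $D$. Across the cut, let $\mb{H}_\Omega$ be a scaled DFT-type matrix with orthogonal columns of squared norm $K$ and entries of uniform magnitude $\sqrt{K/|\Omega^c|}$, so that $\mb{H}_\Omega\mb{H}_\Omega^\dagger$ has $|\Omega|$ nonzero eigenvalues all equal to $K$. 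Every path then traverses exactly one cross-cut link and has capacity $\log(1+K/|\Omega^c|)$, while $\widebar{\C}=(\floor{N/2}+1)\log(1+K)$; hence $\C_\mcal{P}\leq\widebar{\C}/(\floor{N/2}+1)$ follows elementarily from $K/|\Omega^c|\leq K$. Every other cut necessarily severs an internal high-capacity edge, confirming that the prescribed cut is indeed the min-cut.

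The principal difficulty, in my view, is pinning down the precise constant $2\log((N+2)/2)$ in the additive gap: a naive Hadamard-style bound on $\log\det(\mb{I}+\mb{H}\mb{H}^\dagger)$ would yield a considerably larger gap. The compact spectral estimate $\min(n_t,n_r)\log(1+\|\mb{H}\|_F^2)$, combined with AM--GM on $|\Omega|\,|\Omega^c|$, appears to be the essential ingredient. The tightness construction poses no serious obstacle once one recognizes that assigning the cross-cut channel a scaled orthogonal structure---rather than constant entries---is precisely what forces the path-to-cut ratio down to $1/\min(n_t,n_r)$ instead of the trivial value $1$.
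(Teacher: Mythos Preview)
Your proposal is correct. The lower-bound argument is structurally identical to the paper's: both construct a cut whose every cross-link has small point-to-point capacity and then bound the MIMO cut value in terms of the best single link. The paper obtains the key inequality
\[
\widebar{\C}(\Omega,\mcal{V}) \leq \min(|\Omega|,|\Omega^c|)\Big(\max_{i\in\Omega,j\in\Omega^c} R_{i\to j} + \log(|\Omega|\,|\Omega^c|)\Big)
\]
by invoking Theorem~\ref{thm:mimo_theorem} with $k_t=k_r=1$, whereas your Frobenius estimate $\log\det(\mb{I}+\mb{H}\mb{H}^\dagger)\leq\min(n_t,n_r)\log(1+\|\mb{H}\|_F^2)$ together with $\|\mb{H}\|_F^2\leq n_t n_r\max|h_{ij}|^2$ yields exactly the same bound by a shorter, self-contained route that does not require the full MIMO simplification machinery. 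The remaining steps (bounding $\min(|\Omega|,|\Omega^c|)$ and $|\Omega|\,|\Omega^c|$, then contradiction) coincide.

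Your tight example differs from the paper's. The paper uses a \emph{diagonal} (parallel-channel) cross-cut: $\floor{N/2}+1$ disjoint links each of capacity $A$, so that $\widebar{\C}=(\floor{N/2}+1)A$ and every path is forced through one such link, hitting the ratio with equality. Your DFT cross-cut also works---orthogonal columns give $\widebar{\C}=(\floor{N/2}+1)\log(1+K)$ while every cross-link has capacity $\log(1+K/|\Omega^c|)\leq\log(1+K)$---but it yields strict inequality rather than equality. Both constructions establish \eqref{thm:single_path_tight_bound}; the paper's diagonal example is slightly cleaner in that it shows the fraction is attained exactly, whereas yours shows it can be undershot. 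One minor point to make explicit in your write-up: the DFT cross-cut populates \emph{all} entries of $\mb{H}_\Omega$, including the rows/columns indexed by $S$ and $D$, so paths such as $S\to D$ or $S\to v\to D$ with $v\in\Omega^c$ also exist and must be (and are) accounted for by the uniform-magnitude property.
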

Theorem~\ref{thm:single_path} states that for every $N$-relay Gaussian network, the capacity of the best route $\mcal{P}$ in the network is guaranteed to at least be a fraction $2/(N+2)$ of the approximate capacity of the full network. 
The theorem also states that the fraction is tight, that is, for some wireless networks, the best route cannot achieve a capacity greater than $2/(N+2)$ of the approximate capacity of the full network. 
Thus for an optimal routing protocol, the guaranteed rate achieved through routing (in comparison to schemes that use physical layer cooperation across the network nodes) grows inverse proportionally as the number of nodes in the network increase.
\begin{remark}{\rm
    Although the result in Theorem \ref{thm:single_path} is true in general, 
we could get alternative characterizations if we are  interested in specific  classes of network topologies.
For example, consider an $N$-relay Gaussian FD diamond network. The result in Theorem \ref{thm:single_path} states that the guarantee we can give on the capacity of the best route is $\frac{1}{\floor{N/2}+1} \widebar{\C}$.
However, the result in \cite[Theorem 1]{NOF_Simplification} proves that in diamond networks, a route is guaranteed to achieve at least $\frac{1}{2} \widebar{\C}$.
In other words, for this particular case, the guarantee is independent of number of relays $N$ unlike the guarantee in Theorem~\ref{thm:single_path} above.
This suggests that the bound can be refined if we restrict ourselves to a class of $N$-relay Gaussian networks with a specific topology.
The following theorem explores this for the class of layered networks, which also includes diamond networks.
}
\end{remark}

\begin{theorem}
    In any $N$-relay Gaussian layered network with $L$ relay layers, $N_L = N/L$ relays per layer and approximate capacity $\widebar{\C}$, there exists a path $\mcal{P}$ (line network) such that the capacity $C_\mcal{P}$ of the path satisfies
        \begin{align}
            \C_\mcal{P} \geq \begin{cases}
            \dfrac{2}{(L-1)N_L + 4} \widebar{\C} - 2  \log(N_L)  , & \quad L\ \text{odd} \\
                \dfrac{2}{LN_L+2} \widebar{\C} - 2  \log(N_L) , & \quad L\ \text{even}.
            \end{cases}
        \end{align}
    Moreover, there exist layered networks with $L$ layers and $N_L$ relays such that for all paths $\mcal{P}$,
    \begin{align}
        \C_\mcal{P} \leq \begin{cases}
                \dfrac{2}{(L-1)N_L + 4} \widebar{\C},& \quad L\ \text{odd} \\
                \dfrac{2}{LN_L+2} \widebar{\C}, & \quad L\ \text{even}.
            \end{cases}
            \label{eq:single_path_layered_ntwk}
        \end{align}
        \label{thm:single_path_layered}
\end{theorem}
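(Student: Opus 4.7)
My plan is to prove both the achievability (existence of a good path) and the tightness (existence of a matching network) separately, using the MIMO simplification result of Section~\ref{sec:main_results_mimo} as the principal analytical tool.

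For the achievability, I would proceed by the contrapositive. Fix a threshold $T$ and suppose that every $S$-$D$ path has bottleneck capacity strictly less than $T$. Then the subgraph of ``strong'' links $\{(i,j): R_{i\to j} \geq T\}$ does not admit an $S$-$D$ path, so by a standard max-flow/min-cut argument on the layered network there exists a layered cut $\Omega$, specified by subsets $A_\ell \subseteq$ Layer~$\ell$ (with $A_0 = \{S\}$, $A_{L+1} = \emptyset$), across which every crossing edge has capacity below $T$. Since each crossing edge satisfies $|h_{ij}|^2 < 2^T - 1$, the MIMO simplification result, applied to each cross-layer MIMO sub-channel, yields
\begin{equation*}
  \widebar{\C}\;\leq\;\widebar{\C}(\Omega,\mcal{V})\;\leq\;\sum_{\ell=0}^{L}\min\!\bigl(|A_\ell|,\,n_{\ell+1} - |A_{\ell+1}|\bigr)\,\bigl(T + \log N_L\bigr),
\end{equation*}
where $n_0 = n_{L+1} = 1$ and $n_\ell = N_L$ for $\ell \in [1:L]$.

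The main technical step is then to bound the rank-sum $K(\Omega) := \sum_\ell \min(|A_\ell|,\, n_{\ell+1} - |A_{\ell+1}|)$ by the target $f(L,N_L)$, where $f(L,N_L) = (LN_L + 2)/2$ for even $L$ and $((L-1)N_L + 4)/2$ for odd $L$. The arithmetic-mean inequality $\min(a,b) \leq (a+b)/2$ applied term-wise, together with the telescoping cancellation enabled by the boundary constraints $a_0 = 1$, $a_{L+1} = 0$, gives $K(\Omega) \leq LN_L/2 + 1$ uniformly over all layered cuts, which establishes the even-$L$ bound. For odd $L$, the same AM estimate sharpens, after isolating the two boundary transitions, to $K(\Omega) \leq 2 + (L-1)N_L/2 + (|A_1| - |A_L|)/2$; this delivers the desired $(L-1)N_L/2 + 2$ bound whenever a valid cut with $|A_1| \leq |A_L|$ exists. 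To ensure the existence of such a cut, I would perform a case analysis on the strong $S$-links and $D$-links (handling separately the degenerate cases in which $\Omega = \{S\}$ or $\Omega = \mcal{V}\setminus\{D\}$ is already valid) and invoke a symmetry/reversal argument exploiting the invariance of the $\log\det$ formula under $H \leftrightarrow H^{\dagger}$ to select whichever of $\Omega$ and its layer-reversed counterpart has the smaller rank-sum. Combining this with the bound on $K(\Omega)$ and rearranging yields $T^* \geq \widebar{\C}/f(L,N_L) - O(\log N_L)$, as required.

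For the converse, I would construct explicit layered networks generalizing the asymmetric two-relay diamond. The link capacities are chosen so that on every path the bottleneck link has capacity exactly $R$, while a single ``critical'' mixed cut crosses exactly $f(L,N_L)$ weak links each of capacity $R$ (with all other MIMO cuts only larger). A direct evaluation then shows $\widebar{\C} = f(L,N_L)\cdot R + O(\log N_L)$, matching the claimed fraction. The principal obstacle I foresee is the odd-$L$ refinement on the achievability side: the naive arithmetic-mean bound overshoots by $N_L/2 - 1$, and systematically producing a valid cut that saturates $|A_1| \leq |A_L|$ (or an equivalent symmetric estimate) is the delicate combinatorial step, since the cut is not freely chosen but is forced on us by the failure structure of the paths.
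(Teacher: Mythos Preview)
Your overall architecture---apply the MIMO simplification layer by layer, reduce to bounding the rank-sum $K(\Omega)=\sum_{\ell}\min(|A_\ell|,n_{\ell+1}-|A_{\ell+1}|)$, then close with a contradiction on the set of weak links---is exactly the paper's approach, and your even-$L$ bound via $\min(a,b)\le(a+b)/2$ and telescoping is correct (and arguably cleaner than what the paper writes).

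The gap is in the odd-$L$ refinement. Your residual term $(|A_1|-|A_L|)/2$ is real, and the reversal argument you propose to eliminate it does not work: if you pass to the reversed network (swap $S\leftrightarrow D$, conjugate each layer's channel), the source-side cut becomes $\Omega^c$ with layer sizes $|A'_{\ell'}|=n_{L+1-\ell'}-|A_{L+1-\ell'}|$, and a direct substitution shows both $K(\Omega)$ and $|A'_1|-|A'_L|=|A_1|-|A_L|$ are invariant under this operation. So ``whichever of $\Omega$ and its layer-reversed counterpart has the smaller rank-sum'' gives you nothing---they are equal. Nor can you freely pick a different weak-edge cut with $|A_1|\le|A_L|$; as you note yourself, the cut is forced on you.

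The paper sidesteps this entirely with a pairing trick that is strictly sharper than termwise AM and holds uniformly over all cuts. After isolating the two boundary terms (each $\le 1$), it groups the $L-1$ middle terms into consecutive pairs and bounds
\[
\min(|A_\ell|,N_L-|A_{\ell+1}|)+\min(|A_{\ell+1}|,N_L-|A_{\ell+2}|)\;\le\;(N_L-|A_{\ell+1}|)+|A_{\ell+1}|\;=\;N_L,
\]
i.e., take the second argument in one $\min$ and the first in the next. This yields $K(\Omega)\le 2+(L-1)N_L/2$ for every $\Omega$, with no residual and no need for any symmetry or case analysis. (The even case is handled identically after isolating only the $\ell=0$ term.) Replacing your AM step by this pairing closes the gap immediately.

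Two minor points: your per-layer additive constant should be $2\log N_L$ rather than $\log N_L$ (Theorem~\ref{thm:mimo_theorem} with $k_t=k_r=1$ gives $\log(|A_\ell|\cdot|A_{\ell+1}^c|)\le 2\log N_L$), and your description of the tight examples is only a sketch---the paper gives explicit link-by-link constructions for both parities that you should compare against.
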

\begin{remark}{\rm
        By inspecting the bounds in Theorem~\ref{thm:single_path} and Theorem~\ref{thm:single_path_layered}, it is not hard to see that when the number of layers $L$ is even, the fractions in both theorems coincide. 
        To highlight the refinement in the bound of Theorem~\ref{thm:single_path_layered}, consider an example network with $L = 3$ and $N_L = 10$ ($N = L\times N_L = 30$). 
  Theorem~\ref{thm:single_path} guarantees that there exists a route that can achieve a fraction $1/16$ of $\widebar{\C}$. 
  On the other hand, Theorem~\ref{thm:single_path_layered} presents a higher fraction guarantee of $1/12$. 
  If the structure was changed (with the same number of nodes) so that $L = 6$, $N_L = 5$, then the two bounds coincide. 
  Thus, even with the same number of nodes, the number of transmission stages in the network (and the number of nodes per stage) affect the worst-case performance of a route in comparison to the approximate capacity $\widebar{\C}$.
}
\end{remark}
\begin{remark}{\rm
From Theorem \ref{thm:single_path_layered}, we note that for a diamond network (i.e., $L$ = 1) with approximate capacity $\widebar{\C}_{dia}$, the theorem states that there exists a path $\mcal{P}$ such that $\C_\mcal{P} \geq \frac{1}{2} \widebar{\C}_{dia} - \frac{4N}{N+2}\log(N)$, which is consistent with the result proved in \cite{NOF_Simplification} (with a slightly different gap). 
}
\end{remark}
    {\rm
        Theorem~\ref{thm:single_path_layered} highlights that the fraction being independent of the number of nodes is a unique property of diamond networks (among the class of layered networks). 
        {Intuitively, this unique property of diamond networks can be recognized by studying the structure of cuts in layered networks.
           Approximate capacity cuts in a Gaussian network are represented by  MIMO channels with i.i.d inputs.
           As the number of relay layers $L$ increase (with $N_L > 1$), the minimum dimension of the MIMO channel that represents a cut increases with both $L$ and $N_L$. 
    In the special case of the diamond network ($L=1$), a cut can be represented by at most two orthogonal MIMO channels (a SIMO channel and/or a MISO channel) each of minimum dimension of 1 (thus does not scale with $L$ or $N_L$). 
    For a path, we can see through \eqref{eq:cap_path} that a minimum cut can be represented by a SISO channel independent of the number of nodes in the path. 
    Informally, we can think of the dimensions of these MIMO channels as an indicator of how much information a network can convey from the Source to the Destination. 
    Thus, a path conveys information through a dimension of 1.
    With this view in mind, it is not hard to credit the difference in guarantees to the fact that in a diamond network the minimum cut may be a cross cut of dimension 2 that the best path crosses once, while in a general layered network the minimum cut may be a cross-cut of dimension that grows with $L$ and $N_L$. 
A formal characterization of the aforementioned relationship for MIMO channels with i.i.d inputs based on their dimensions is presented in Theorem \ref{thm:mimo_theorem}. }
}

\begin{theorem}
    \label{thm:mimo_theorem}
    For an $n_t \times n_r$ Gaussian MIMO channel with i.i.d inputs and capacity $\C_{n_t,n_r}$, the best $k_t \times k_r$ subchannel has a capacity $\C_{k_t,r_r}^\star$ such that
    \begin{align}
        \C_{k_t,k_r}^\star \geq \frac{\min(k_t,k_r)}{\min(n_t, n_r)}\ \C_{n_t,n_r} - \log\left({n_t \choose k_t} {n_r \choose k_r}\right).
    \end{align}
    Moreover, this bound is tight up to a constant gap, i.e., there exist $n_t \times n_r$ channels for which
    \begin{align*}
        \C_{k_t,k_r}^\star \leq \frac{\min(k_t,k_r)}{\min(n_t, n_r)}\ \C_{n_t,n_r}.
    \end{align*}
\end{theorem}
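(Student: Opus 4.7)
The plan is to lower bound $\C_{k_t,k_r}^\star$ by constructing an explicit subchannel in two stages: first reduce the receive dimension via an information-theoretic submodularity argument, then select the column subset via a Cauchy--Binet-style principal-submatrix identity. I would assume without loss of generality that $n_r\leq n_t$, so that $m := \min(n_t,n_r)=n_r$; this is legitimate because Sylvester's determinant identity $\log\det(\mathbf{I}+\mathbf{H}\mathbf{H}^\dagger)=\log\det(\mathbf{I}+\mathbf{H}^\dagger\mathbf{H})$ lets me replace $\mathbf{H}$ by $\mathbf{H}^\dagger$ without changing the capacity or the target ratio $\min(k_t,k_r)/\min(n_t,n_r)$. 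Writing $q := \min(k_t,k_r)$, the task reduces to producing a $k_t\times k_r$ subchannel, equivalently a $k_r\times k_t$ submatrix of $\mathbf{H}$, with capacity at least $(q/n_r)\,\C_{n_t,n_r}$ up to the claimed additive gap.

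For the row-reduction stage, I would exploit that distinct receive antennas have independent Gaussian noises, so that $H(Y_i\mid X,\mathcal{C})=H(W_i)$ is constant in any auxiliary conditioning set $\mathcal{C}$. This immediately gives the monotonicity $I(X;Y_i\mid Y_{-i})\leq I(X;Y_i\mid Y_{1:i-1})$. Combining this with the chain-rule expansion $\C_{n_t,n_r}=\sum_{i=1}^{n_r}I(X;Y_i\mid Y_{1:i-1})$ yields $\sum_i I(X;Y_i\mid Y_{-i})\leq \C_{n_t,n_r}$, so some row can be dropped while losing at most $\C_{n_t,n_r}/n_r$ of the capacity. Iterating this greedy step $n_r-q$ times produces a $q\times n_t$ submatrix $\mathbf{H}'$ with capacity $\C'\geq(q/n_r)\,\C_{n_t,n_r}$ and no additional additive gap.

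For the column-selection stage, I would dualize via $\log\det(\mathbf{I}_q+\mathbf{H}'{\mathbf{H}'}^\dagger)=\log\det(\mathbf{A})$ with $\mathbf{A}:=\mathbf{I}_{n_t}+{\mathbf{H}'}^\dagger\mathbf{H}'$, under which a column subchannel indexed by $S_t\subseteq[n_t]$ with $|S_t|=k_t$ corresponds to the $k_t\times k_t$ principal submatrix $\mathbf{A}_{S_t,S_t}$. The eigenvalues of $\mathbf{A}$ consist of the $q$ values $1+\sigma_i^2$ coming from the singular values of $\mathbf{H}'$ together with the value $1$ of multiplicity $n_t-q$. Since $\sum_{|S_t|=k_t}\det(\mathbf{A}_{S_t,S_t})$ equals the $k_t$-th elementary symmetric polynomial of the spectrum of $\mathbf{A}$, lower-bounding this sum by the contribution of the subsets that contain all $q$ large eigenvalues (together with $k_t-q$ copies of $1$) gives $\sum_{|S_t|=k_t}\det(\mathbf{A}_{S_t,S_t})\geq\binom{n_t-q}{k_t-q}\,2^{\C'}$ whenever $k_t\geq q$. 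Averaging over the $\binom{n_t}{k_t}$ subsets and taking logarithms, the best $S_t$ yields a submatrix with capacity at least $\C'-\log\bigl[\binom{n_t}{k_t}/\binom{n_t-q}{k_t-q}\bigr]$.

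To close the argument I would split into the cases $q=k_r$ and $q=k_t$. When $q=k_r\leq k_t$ the constructed subchannel is already $k_r\times k_t$, and the identity $\binom{n_t}{k_t}/\binom{n_t-k_r}{k_t-k_r}=\binom{n_t}{k_r}/\binom{k_t}{k_r}$ together with $\binom{n_r}{k_r}\binom{n_t-k_r}{k_t-k_r}\geq 1$ bounds the gap by $\log\bigl[\binom{n_t}{k_t}\binom{n_r}{k_r}\bigr]$. When $q=k_t<k_r$ the construction produces a $k_t\times k_t$ submatrix (with gap $\log\binom{n_t}{k_t}$), and I then extend it to a $k_r\times k_t$ submatrix by appending any $k_r-k_t$ of the originally discarded rows, which can only increase capacity; the gap again fits under $\log\bigl[\binom{n_t}{k_t}\binom{n_r}{k_r}\bigr]$. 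For tightness I would exhibit an $\mathbf{H}$ whose nonzero singular values are all equal to a single $\sigma$ and are placed on an $m$-element diagonal: then $\C_{n_t,n_r}=m\log(1+\sigma^2)$ while any $k_t\times k_r$ submatrix contains at most $q$ nonzero singular values, all equal to $\sigma$, so $\C_{k_t,k_r}^\star=q\log(1+\sigma^2)=(q/m)\,\C_{n_t,n_r}$. The main delicate step will be the bookkeeping in the column-selection identity, verifying in each of the two cases that the combinatorial factor is dominated by $\log\bigl[\binom{n_t}{k_t}\binom{n_r}{k_r}\bigr]$.
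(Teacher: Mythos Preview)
Your argument is correct and reaches the same conclusion, but the route differs from the paper's in one meaningful way. For the column-selection stage you use exactly the paper's device: the identity $\sum_{|S|=k}\det(\mathbf{A}_{S,S})=e_k(\lambda_1,\ldots,\lambda_n)$ (the paper derives this as a property of characteristic polynomials of principal submatrices) together with the lower bound by the terms containing all nontrivial eigenvalues, which is the paper's Lemma~1 Case~2. Where you diverge is the zero-gap row-reduction step: the paper obtains $\C^\star_{n_t,n_r-1}\geq\frac{n_r-1}{n_r}\,\C_{n_t,n_r}$ by applying the same principal-minor identity and then AM--GM on the eigenvalue products, whereas you obtain it from the submodularity inequality $\sum_i I(X;Y_i\mid Y_{-i})\leq I(X;Y)$, which is slicker and avoids any algebra. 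Your two-stage, two-case organization (reduce rows to $q$ with no gap, then columns to $k_t$ with a single binomial gap, appending rows back when $q=k_t$) is also tidier than the paper's three-case combination of two lemmas. The tight example (a diagonal $\mathbf{H}$ with $m$ equal nonzero singular values) is the same as the paper's.
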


\begin{remark}{\rm
        Although the result in Theorem~\ref{thm:mimo_theorem} plays a fundamental role in our proofs of Theorem~\ref{thm:single_path} and Theorem~\ref{thm:single_path_layered}, it is of independent interest for the selection of transmit/recieve antennas in a MIMO channel. 
        The ratio in Theorem \ref{thm:mimo_theorem} is the same one would expect between the maximum multiplexing gains of an $n_t \times n_r$ MIMO channel and its best $k_t \times k_r$ MIMO subchannel at high SNR.
        The difference asserted by Theorem \ref{thm:mimo_theorem} is that the same ratio is also true in MIMO channels with i.i.d inputs for lower SNR levels with a gap that is not a function of SNR.
}
\end{remark}

\section{A Simplification Result for MIMO Channels}
\label{sec:main_results_mimo}
In this section, we derive the result in Theorem~\ref{thm:mimo_theorem} which forms the basis of the proofs of Theorem~\ref{thm:single_path} and Theorem~\ref{thm:single_path_layered}.
Towards proving Theorem~\ref{thm:mimo_theorem}, we first prove two subsidiary results which are employed in our proof. 

Our first subsidiary result proves an incremental version of Theorem~\ref{thm:mimo_theorem} where we only wish to reduce the number of receiving antennas. This is summarized in the following Lemma.
\begin{lemma}
    \label{lem:incremental_mimo}
    For an $n_t \times n_r$ Gaussian MIMO channel with i.i.d inputs and capacity $\C_{n_t,n_r}$ where $n_t \leq n_r$, the best $n_t \times k_r$ MIMO subchannel has a capacity $\C^\star_{n_t,k_r}$ such that

     \begin{subequations}
    \label{eq:combined_cases_const_thm}
     1) For $k_r \leq n_t \leq n_r$, 
     \begin{align}
         \label{k_r_smaller_n_t}
             \C_{k_t,k_r}^\star \geq \frac{k_r}{n_t}\ \C_{n_t,n_r} - \log\left(\frac{{n_r \choose k_r}}{{n_t \choose k_r}}\right).
         \end{align}
     \ \ \ 2) For $n_t \leq k_r \leq n_r$, 
         \begin{equation}
             \label{k_r_larger_n_t}
             \C_{n_t,n_r} \geq\  \C^\star_{n_t,k_r} \geq\ \C_{n_t,n_r} - \log\left(\frac{{n_r \choose k_r}}{{n_r-n_t \choose k_r-n_t}}\right).
         \end{equation}
     \end{subequations}
 \end{lemma}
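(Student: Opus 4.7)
The plan is to reduce the entire statement to a question about principal minors of a single Hermitian matrix. Writing the channel as $\mb{H}$ of size $n_r \times n_t$ and letting $\mb{M} = \mb{I}_{n_r} + \mb{H}\mb{H}^\dagger$, the key identity is that for any receive-antenna subset $S$ with $|S|=k_r$ one has $\mb{I}_{k_r} + \mb{H}_S\mb{H}_S^\dagger = \mb{M}_{S,S}$, so the capacity of the corresponding subchannel is $\log\det(\mb{M}_{S,S})$ and the lemma amounts to lower bounding $\max_S \det(\mb{M}_{S,S})$.

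The second ingredient is the classical identity
\[
    \sum_{|S|=k_r}\det(\mb{M}_{S,S}) \;=\; e_{k_r}(\lambda_1,\ldots,\lambda_{n_r}),
\]
where $e_{k_r}$ is the elementary symmetric polynomial and the $\lambda_i$ are the eigenvalues of $\mb{M}$. Because $\mb{H}\mb{H}^\dagger$ has rank at most $n_t\le n_r$, the spectrum of $\mb{M}$ has the block structure $\{1,\ldots,1,\mu_1,\ldots,\mu_{n_t}\}$, with $n_r-n_t$ trivial eigenvalues equal to $1$ and $n_t$ nontrivial eigenvalues $\mu_i = 1+\sigma_i^2(\mb{H})$, so that $\prod_i \mu_i = \det(\mb{M})$ and $\C_{n_t,n_r} = \log\det(\mb{M})$. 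Averaging over subsets gives $\max_S \det(\mb{M}_{S,S}) \geq \binom{n_r}{k_r}^{-1}e_{k_r}(\lambda_1,\ldots,\lambda_{n_r})$, and the two cases of the lemma differ only in which summand of $e_{k_r}$ I retain as a lower bound.

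For case 2 ($n_t\le k_r$) I would keep the single term that selects all $n_t$ nontrivial eigenvalues together with any $k_r-n_t$ trivial ones; this contributes $\binom{n_r-n_t}{k_r-n_t}\det(\mb{M})$, and dropping the other nonnegative terms and then taking logs produces the claimed bound, while the matching upper bound $\C^\star_{n_t,k_r}\le \C_{n_t,n_r}$ follows from monotonicity of MIMO capacity in the number of receive antennas. For case 1 ($k_r\le n_t$) I would drop every summand of $e_{k_r}(\lambda)$ that involves a trivial eigenvalue, leaving $e_{k_r}(\mu_1,\ldots,\mu_{n_t})$; the main step is then Maclaurin's inequality applied to the nontrivial eigenvalues, which yields $e_{k_r}(\mu) \geq \binom{n_t}{k_r}\bigl(\prod_i \mu_i\bigr)^{k_r/n_t}$. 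Combining with the averaging bound and taking logs reproduces both the fraction $k_r/n_t$ and the exact gap $\log\bigl(\binom{n_r}{k_r}/\binom{n_t}{k_r}\bigr)$.

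The only genuine obstacle is case 1: once the principal-minor reformulation is in hand, the $k_r\le n_t$ regime is the one that requires a nontrivial symmetric-function inequality (Maclaurin), while case 2 reduces to keeping a single dominant term. Recognizing that the hypothesis $n_t\le n_r$ is precisely what pins down the block structure of the spectrum, and therefore what makes the two cases split cleanly at $k_r=n_t$, is the conceptual step that drives the whole argument.
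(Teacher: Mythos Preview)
Your proposal is correct and follows essentially the same route as the paper. The paper derives your ``classical identity'' $\sum_{|S|=k_r}\det(\mb{M}_{S,S})=e_{k_r}(\lambda)$ from a characteristic-polynomial derivative property (its Property~1), and in Case~1 it obtains the bound $e_{k_r}(\mu)\ge\binom{n_t}{k_r}\bigl(\prod_i\mu_i\bigr)^{k_r/n_t}$ by applying AM--GM directly to the $\binom{n_t}{k_r}$ products rather than invoking Maclaurin by name; but these are the same inequalities, and Case~2 is handled identically by isolating the $\binom{n_r-n_t}{k_r-n_t}$ terms containing all nontrivial eigenvalues.
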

 \begin{proof} 
     The proof relies on properties of principal submatrices of a Hermitian matrix.
     The detailed proof can be found in Appendix~\ref{appendix:mimo_proof}.
 \end{proof}
We can combine the lower bounds in \eqref{k_r_smaller_n_t} and \eqref{k_r_larger_n_t} as
\begin{align}
    \C^\star_{n_t,k_r} \geq \frac{\min(k_r,n_t)}{n_t}\ \C_{n_t,n_r} - G,
\end{align}
where $G$ is the constant incurred in \eqref{k_r_smaller_n_t} (resp. \eqref{k_r_larger_n_t}) when $k_r \leq n_t$ (resp. $k_r > n_t$).
\begin{remark}{\rm
        Lemma~\ref{lem:incremental_mimo} can also apply to the case where $n_t \geq n_r$ and we wish to select a subchannel $k_t \times n_r$. This can be done by considering the reciprocal MIMO channel or appealing to Sylvester's determinant identity.
}
\end{remark}
 Our second subsidiary result, stated in the following Lemma, provides a guarantee on selecting MIMO subchannels (similar to the statement of Theorem~\ref{thm:mimo_theorem}) without a constant gap.
\begin{lemma}
     \label{zero_gap_thm}
     For an $n_t \times n_r$ Gaussian MIMO channel with i.i.d inputs and capacity $C_{n_t,n_r}$, the best $k_t \times k_r$ MIMO subchannel has a capacity $C^\star_{k_t,k_r}$ such that
     \begin{equation}
                \C_{k_t,k_r}^\star \geq \dfrac{k_t\cdot k_r}{n_t\cdot n_r}\ \C_{n_t,n_r}.
         \label{zero_gap}
     \end{equation}
     Moreover, there exist MIMO channel configurations with i.i.d inputs such that the capacity of the best $k_t \times k_r$ MIMO subchannel is  $\C_{k_t,k_r}^{\star} = \frac{k_t\cdot k_r}{n_t\cdot n_r}\ \C_{n_t,n_r}$.
 \end{lemma}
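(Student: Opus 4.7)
My plan is to establish \eqref{zero_gap} through an averaging argument based on Shearer's entropy lemma, applied twice---once on the receive side and once on the transmit side---to the differential entropy of the Gaussian output $Y=HX+Z\sim\mcal{CN}(\mathbf{0},Q)$, where $Q=\Id+HH^{\dagger}$. Unlike Lemma~\ref{lem:incremental_mimo}, whose proof via Cauchy-type interlacing on principal submatrices carries an unavoidable additive gap, achieving the exact fraction $k_t k_r/(n_t n_r)$ requires a direct sub-additivity inequality on principal submatrices of $Q$ with no interlacing slack.

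First, I would fix the transmit side and average over the choice of $k_r$ receive antennas. For a subset $S\subseteq[n_r]$ of size $k_r$, the marginal $Y_S$ is Gaussian with covariance $Q_S=\Id_{k_r}+H_S H_S^{\dagger}$, where $H_S$ restricts the rows of $H$ to $S$. Since every index $i\in[n_r]$ lies in exactly $\binom{n_r-1}{k_r-1}$ such subsets, Shearer's lemma for differential entropy gives $\binom{n_r-1}{k_r-1}\,h(Y)\le\sum_{|S|=k_r}h(Y_S)$. Substituting $h(Y)=n_r\log(\pi e)+\log\det Q$ and $h(Y_S)=k_r\log(\pi e)+\log\det Q_S$, the $\log(\pi e)$ contributions cancel via the identity $n_r\binom{n_r-1}{k_r-1}=k_r\binom{n_r}{k_r}$, and one obtains
\[
\frac{1}{\binom{n_r}{k_r}}\sum_{|S|=k_r}\log\det Q_S\ \ge\ \frac{k_r}{n_r}\,\log\det Q.
\]
As the maximum over $S$ exceeds the average, there is an $S^{\star}$ with $\C_{n_t,k_r}^{\star}\ge(k_r/n_r)\,\C_{n_t,n_r}$.

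Next, fixing the receive set to $S^{\star}$, I would repeat the same Shearer averaging on the transmit side of the resulting $n_t\times k_r$ subchannel, working with $\Id_{n_t}+H_{S^{\star}}^{\dagger}H_{S^{\star}}$ via Sylvester's determinant identity so that the previous argument applies verbatim. This yields $\C_{k_t,k_r}^{\star}\ge(k_t/n_t)\,\C_{n_t,k_r}^{\star}$, and chaining the two inequalities produces the bound $\C_{k_t,k_r}^{\star}\ge (k_t k_r/(n_t n_r))\,\C_{n_t,n_r}$. For the tightness claim, I would exhibit the rank-one channel $H=c\,\mathbf{1}_{n_r}\mathbf{1}_{n_t}^{T}$, for which $\C_{n_t,n_r}=\log(1+c^2 n_t n_r)$ and the best $k_t\times k_r$ subchannel has capacity $\log(1+c^2 k_t k_r)$; letting $c\to 0$, the ratio $\C_{k_t,k_r}^{\star}/\C_{n_t,n_r}$ tends to $k_tk_r/(n_tn_r)$, so the bound is matched arbitrarily closely.

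The main difficulty, in my view, is not any single algebraic step but recognizing that the factor $k_tk_r/(n_tn_r)$ arises precisely from a symmetric two-step averaging, and that Shearer's lemma (equivalently, a generalized Hadamard--Fischer-type inequality of the form $\prod_{|S|=k}\det Q_S\ge(\det Q)^{\binom{n-1}{k-1}}$) is the right tool to deliver this exact fraction with no additive residue---iterating Lemma~\ref{lem:incremental_mimo} would inevitably reintroduce the very constant gap that Lemma~\ref{zero_gap_thm} is designed to exclude.
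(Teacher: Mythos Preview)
Your proposal is correct and takes a genuinely different route from the paper. The paper proves Lemma~\ref{zero_gap_thm} incrementally: it first shows that dropping a \emph{single} receive antenna preserves at least a fraction $(n_r-1)/n_r$ of the capacity, by combining the principal-minor identity $\sum_{|\Lambda|=n_r-1}\det(\mb{I}+\mb{H}_\Lambda\mb{H}_\Lambda^\dagger)=e_{n_r-1}(\lambda_1,\dots,\lambda_{n_r})$ (a consequence of Property~\ref{property_submatrices}) with the AM--GM inequality; the transmit side is handled by Sylvester, and the bound for general $(k_t,k_r)$ is obtained by telescoping these one-step reductions. Your argument instead jumps from $n_r$ to $k_r$ (and then $n_t$ to $k_t$) in a single averaging step via Shearer's lemma---equivalently, the Hadamard--Fischer--Sz\'asz inequality $\prod_{|S|=k}\det Q_S\ge(\det Q)^{\binom{n-1}{k-1}}$ for positive definite $Q$. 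Both proofs invoke Sylvester for the transmit reduction and exhibit the same rank-one (all-ones) tight example in the low-SNR limit. Your approach is cleaner in that it delivers the exact fraction in one shot per side without telescoping; the paper's approach has the virtue of reusing the eigenvalue machinery already set up for Lemma~\ref{lem:incremental_mimo}. One small correction to your commentary: the paper does not iterate Lemma~\ref{lem:incremental_mimo} to obtain Lemma~\ref{zero_gap_thm}; it proves the one-step bound afresh with AM--GM, which is itself gap-free, so the telescoping never accumulates an additive residue.
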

 \begin{proof}
     Lemma~\ref{zero_gap_thm} is proved in Appendix \ref{sec:proof_zero_gap}.\\
 \end{proof}
\subsection{Proof of Lower Bound in Theorem \ref{thm:mimo_theorem}}
In this subsection we derive the bound on $\C^\star_{k_t,k_r}$ in Theorem~\ref{thm:mimo_theorem} for any chosen dimension $(k_t,k_r)$ using Lemma~\ref{zero_gap_thm} and Lemma~\ref{lem:incremental_mimo}.
Assuming that $n_t \leq n_r$, the proof roughly goes as follows:
From the $n_t \times n_r$ channel, we can create 
an $n_t \times k_r$ subchannel such that $\C^\star_{n_t,k_r} \geq \frac{\min(k_r,n_t)}{n_t}\ \C_{n_t,n_r} - G_1$, by keeping only the best $k_r$ receiver antennas;
from this $n_t \times k_r$ channel, we can next get a $k_t \times k_r$ subchannel such that
\begin{align*}
    \C^\star_{k_t,k_r} &\geq \frac{\min(k_t,k_r)}{\min(n_t,k_r)}\ \C^\star_{n_t,k_r} - G_2 \\
    &\geq \frac{\min(k_t,k_r)}{\min(n_t,n_r)}\ \C_{n_t,n_r} - G_1 - G_2.
\end{align*}
Formally, the constants $G_1$ and $G_2$ and the applications of Lemma~\ref{lem:incremental_mimo} and Lemma{\ref{zero_gap_thm} are captured in the following three cases:
\begin{enumerate}
    \item For $k_t \leq k_r \leq n_t \leq n_r$:
        \begin{align*}
        \C^\star_{k_t,k_r} \stackrel{(a)}\geq& \frac{k_t}{k_r}\ \C^\star_{n_t,k_r} - \log\left(\frac{ {n_t \choose k_t } }{ {k_r \choose k_t } } \right)\\
                \stackrel{(b)}\geq& \frac{k_r}{n_t}\frac{k_t}{k_r} \C_{n_t,n_r} - \frac{k_t}{k_r}\log\left(\frac{ {n_r \choose k_r } }{ {n_t \choose k_r } } \right) - \log\left(\frac{ {n_t \choose k_t } }{ {k_r \choose k_t } } \right)\\
                \geq& \frac{k_t}{n_t}\ \C_{n_t,n_r} - \log\left( {n_t \choose k_t }  \right) -\log\left( {n_r \choose k_r } \right),
        \end{align*}
        where: (a) follows by applying \eqref{k_r_smaller_n_t} on the reciprocal of the $n_t \times k_r$ MIMO channel; (b) applies \eqref{k_r_smaller_n_t} to relate  $\C^\star_{n_t,k_r}$ to $\C_{n_t,n_r}$.\\
    \item For $k_r \leq k_t \leq n_t \leq n_r$:
\begin{align*}
                \C^\star_{k_t,k_r} \stackrel{(c)}\geq&\ \C^\star_{n_t,k_r} - \log\left(\frac{ {n_t \choose k_t } }{ {n_t-k_r \choose k_t-k_r } } \right)\\
                \stackrel{(d)}\geq& \frac{k_r}{n_t}\ \C_{n_t,n_r} - \log\left(\frac{ {n_t \choose k_t } }{ {n_t-k_r \choose k_t-k_r } } \right) -\log\left(\frac{ {n_r \choose k_r } }{ {n_t \choose k_r } } \right)\\
                \geq& \frac{k_r}{n_t}\ \C_{n_t,n_r} - \log\left( {n_t \choose k_t }  \right) -\log\left( {n_r \choose k_r } \right),
\end{align*}
where: (c) relates $\C_{k_t,k_r}$ to $\C_{n_t,k_r}$ using \eqref{k_r_larger_n_t}; relation (d) follows by applying \eqref{k_r_smaller_n_t} to the $n_t \times n_r$ MIMO channel.\\
    \item For $k_t \leq n_t \leq k_r \leq n_r$:
        \begin{align*}
                \C^\star_{k_t,k_r} \stackrel{(e)}{\geq}&\ \frac{k_t}{n_t}\ \C^\star_{n_t,k_r}\\
                \stackrel{(f)}\geq& \frac{k_t}{n_t}\ \C_{n_t,n_r} - \log\left(\frac{ {n_r \choose k_r } }{ {n_r-n_t \choose k_r-n_t } } \right)\\
                \geq& \frac{k_t}{n_t}\ \C_{n_t,n_r} - \log\left( {n_t \choose k_t }  \right) -\log\left( {n_r \choose k_r } \right),
        \end{align*}
        where (e) follows by applying Lemma~\ref{zero_gap_thm} to select an $k_t \times k_r$ subchannel from the $n_t \times k_r$ MIMO channel; The relation (f) follows from \eqref{k_r_larger_n_t}.
\end{enumerate}
By combining the aforementioned cases, we have
\begin{align*}
    \C^\star_{k_t,k_r} \geq \frac{\min{(k_t,k_r)}}{\min{(n_t,n_r)}}\ \C_{n_t,n_r} - \log\left( {n_t \choose k_t }  {n_r \choose k_r } \right),
\end{align*}
which is the lower bound stated in Theorem \ref{thm:mimo_theorem}.

\subsection{Tight Example}
To prove that there exists a class of MIMO channels for which the lower bound in Theorem~\ref{thm:mimo_theorem} is tight (to within a constant gap), consider the $n \times n$ parallel MIMO channel with unit capacities between the $i$-th transmitter and $i$-th receiver.
The capacity $\C_{n,n}$ of this channel is $n$.
For any $(k_t,k_r)$, it is not hard to see that a $k_t \times k_r$ MIMO subchannel can at most capture $\min(k_t,k_r)$ of the parallel channels. Therefore, we have $\C^\star_{k_t,k_r} = \min(k_t,k_r)$ and as a result
\begin{align*}
    \frac{\C^\star_{k_t,k_r}}{\C_{n_t,n_r}} = \frac{\min(k_t,k_r)}{n}.
\end{align*}
This concludes our proof of Theorem~\ref{thm:mimo_theorem}.
\section{Theorem~\ref{thm:single_path}: A General Guarantee for Routes in $N$-relay Gaussian networks}\label{sec:main_results_general}
In this section, we derive the guarantee on the capacity of the best route in a $N$-relay  network presented in Theorem~\ref{thm:single_path}.
We start off by showing that this guarantee is fundamental as there exists a class of networks for which the guarantee is tight up to a constant gap and then proceed to prove the lower bound in the theorem.

\subsection{Tight Examples}
In this subsection, we prove the existence of $N$-relay  networks where the capacity of each path satisfies \eqref{thm:single_path_tight_bound}.
This is sufficient to prove that the ratio in Theorem \ref{thm:single_path} is tight.
To this end, let $N_f \triangleq \floor{(N-1)/2}$ and consider the following network constructions.
For odd $N$ and $A > 0$, we have
\begin{align}
    \label{eq:single_path_ntwk}
    R_{S\to 1}  &= A,\quad R_{1 \to D} = N^2 A, \nonumber \\
    R_{S\to i} &= N^2A,\quad i \in [2:N_f{+} 1], \nonumber\\
    R_{i\to D} &= N^2A,\quad i \in [N_f{+} 2 :2 N_f {+}1],\\
    R_{i\to j} &= A,\quad i \in [2:N_f {+}1 ],\ j = i {+} N_f \nonumber\\
    R_{i\to j} &= 0, \qquad \text{otherwise}. \nonumber
\end{align}
For an even $N$, we have that $N=  2N_f +2 $.
Therefore, the construction includes an extra relay connected only to $S$ and $D$ as follows,
\begin{align*}
    R_{S\to N} = N^2A,\qquad R_{N \to D} = A.
\end{align*}
Fig.~\ref{fig:example_ntwks_general} illustrates the network structure for odd and even number of relays.
From the structure and the cut illustrated in Fig.~\ref{fig:example_ntwks_general}, it is clear that the approximate capacity is $\widebar{\C} = A(\floor{N/2} + 1)$.
Additionally, from Fig.~\ref{fig:example_ntwks_general} (and the construction in \eqref{eq:single_path_ntwk}), it is clear that any path that connects $S$ and $D$ passes through a link of capacity $A$.
As a result, we have
\[
    \forall\ \text{paths}\ \mcal{P}\ :\ \C_{\mcal{P}} \leq A = \frac{1}{\floor{N/2}+1} \widebar{\C}.
\]
\subsection{Proof of Lower Bound in Theorem \ref{thm:single_path}}
\begin{figure*}[t!]
    \centering
    \begin{subfigure}[t]{0.5\textwidth}
        \centering
        \includegraphics[height=2in]{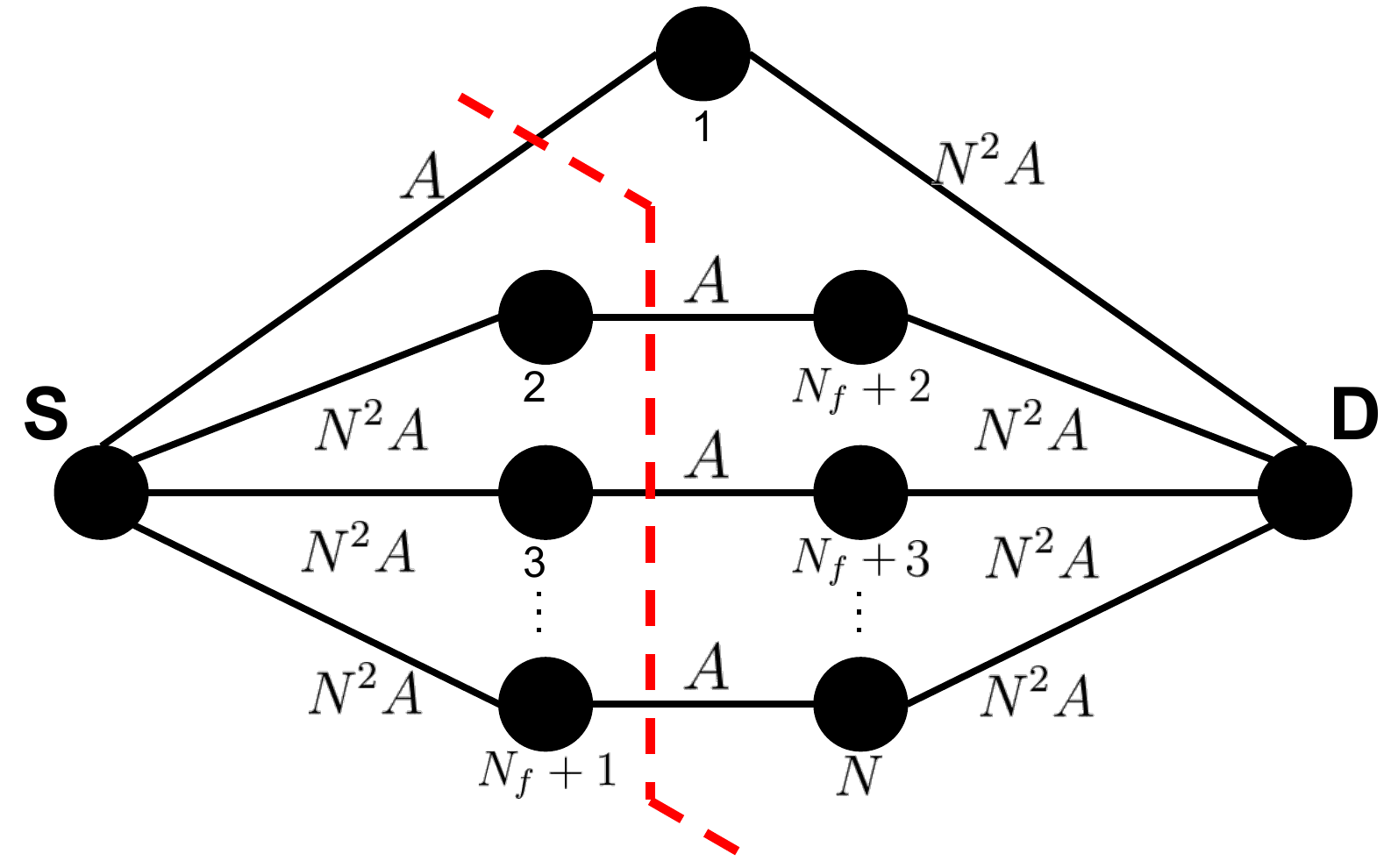}
        \caption{$N$ odd}
    \end{subfigure}%
    ~ 
    \begin{subfigure}[t]{0.5\textwidth}
        \centering
        \includegraphics[height=2.1in]{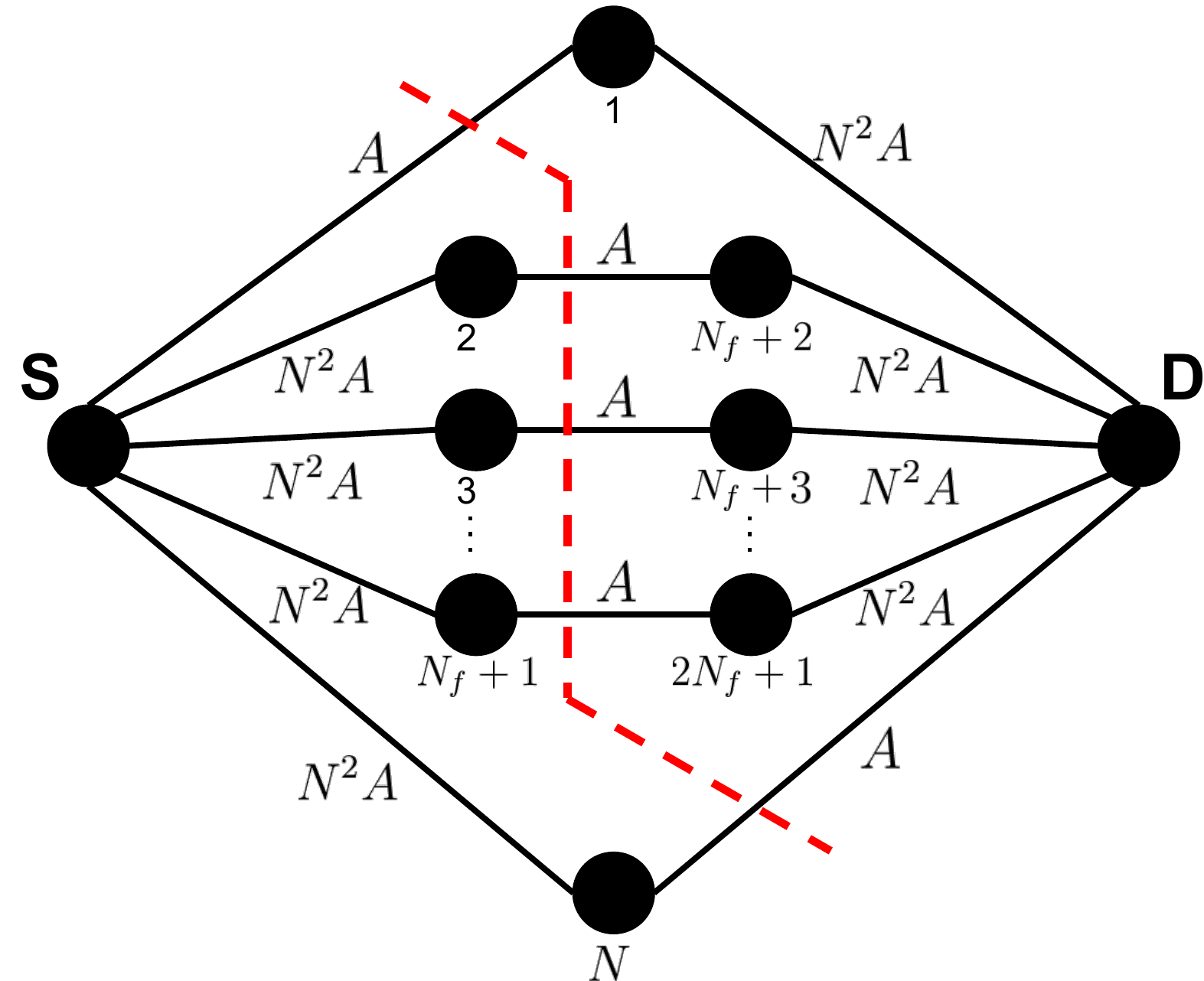}
        \caption{$N$ even}
    \end{subfigure}
    \caption{$N$-relay  networks where every route achieves at most a capacity that is $\frac{1}{1+N/2}$ of the approximate capacity. Edge labels indicate the capacity of the corresponding links. The red line highlights the minimum cut in each network.}
    \label{fig:example_ntwks_general}
\end{figure*}
Next, we prove the lower bound in \eqref{eq:single_path_ntwk}. As discussed in Section \ref{sec:model}, the approximate capacity of an $N$-relay  network is 
\begin{align}
    \widebar{\C} = \min_{\substack{\Omega \in 2^{\mcal{V}}}} \widebar{\C}(\Omega,\mcal{V}),
    \label{eq:C_bar_start}
\end{align}
where $\mcal{V}$ is the set of all nodes in the network.
The term $\widebar{\C}(\Omega,\mcal{V})$ is the capacity of the Gaussian MIMO channel, between nodes $\Omega$ and $\Omega^c$, with i.i.d inputs. 
Therefore, our proof of Theorem \ref{thm:single_path} is built around Theorem~\ref{thm:mimo_theorem} on MIMO channels that was proved in the previous section.

By applying the result in Theorem \ref{thm:mimo_theorem} with $K_t = K_r = 1$ on \eqref{eq:C_bar_start}, we get that
\begin{align*}
    \label{eq:cut_upperbound_general}
    \widebar{\C} &\leq \widebar{\C}(\Omega,\mcal{V}),\qquad& \forall \Omega \subseteq \mcal{V} \\
    &\stackrel{(a)}\leq \min (|\Omega|,|\Omega^c|)\max_{i \in \Omega, j \in \Omega^c} R_{i\to j}\ +\ \min (|\Omega|,|\Omega^c|) \log \left(|\Omega||\Omega^c|\right)
,\qquad& \forall \Omega \subseteq \mcal{V}\\
&\stackrel{(b)}\leq \left(\left\lfloor\frac{N}{2}\right\rfloor + 1\right) \max_{i \in \Omega, j \in \Omega^c} R_{i\to j} + 2\left(\left\lfloor\frac{N}{2}\right\rfloor+1\right)\log\left(\frac{N+2}{2}\right), \qquad& \forall \Omega \subseteq \mcal{V}, \numberthis 
\end{align*}
where: $(a)$ follows from Theorem~\ref{thm:mimo_theorem}; relation $(b)$ follows since $\max_{\Omega}\min(|\Omega|,|\Omega^c|) = \floor{N/2}+1$ and $\max_{\Omega}\log(|\Omega||\Omega^c|) \leq 2\log((N+2)/2)$.
We can now use the upper bound in \eqref{eq:cut_upperbound_general} to prove Theorem~\ref{thm:single_path} by contradiction.

Assume that for all paths $\mcal{P}$ in the network, the capacity of the path $\C_\mcal{P}$ is
\begin{equation}
    \C_\mcal{P} < \frac{1}{\floor{N/2}+1} \widebar{\C} - 2\log\left(\frac{N+2}{2}\right).
    \label{eq:contradiction_bound_general}
\end{equation}
Let $\mcal{B} = \{(i,j)\ |\ R_{i\to j} < \frac{1}{\floor{N/2}+1} \widebar{\C} - 2\log((N+2)/2) \}$ be the set of links that have a capacity strictly less than the bound in Theorem \ref{thm:single_path}.
The relation \eqref{eq:contradiction_bound_general} implies that every path $\mcal{P}$ has a at least one link that belongs to $\mcal{B}$. Therefore, removing $\mcal{B}$ completely disconnects the source and destination.
As a result, we can construct a cut $\Omega_\mcal{B}$ in the network by removing links in $\mcal{B}$. Note that since $\Omega_\mcal{B}$ is constructed using links from $\mcal{B}$, then we have
\begin{equation}
    \max_{i \in \Omega_\mcal{B}, j \in \Omega_\mcal{B}^c} R_{i\to j} < \frac{1}{\floor{N/2}+1} \widebar{\C} -2 \log\left(\frac{N+2}{2}\right).
    \label{eq:contradiction_bound_general_conc}
\end{equation}
If we now apply \eqref{eq:cut_upperbound_general} for $\Omega_\mcal{B}$ we get
\begin{align*}
    \widebar{\C} &\leq \left(\left\lfloor\frac{N}{2}\right\rfloor + 1\right) \max_{i \in \Omega_\mcal{B}, j \in \Omega_\mcal{B}^c} R_{i\to j} + 2\left(\left\lfloor\frac{N}{2}\right\rfloor + 1\right)\log\left(\frac{N+2}{2}\right)\\
    &< \left(\left\lfloor\frac{N}{2}\right\rfloor + 1\right) \left[\frac{1}{\floor{N/2}+1} \widebar{\C} - 2\log\left(\frac{N+2}{2}\right)\right]+ 2\left(\left\lfloor\frac{N}{2}\right\rfloor + 1\right)\log\left(\frac{N+2}{2}\right)\\
    &= \widebar{\C},
\end{align*}
which is a contradiction. This completes our proof for Theorem \ref{thm:single_path}.

\section{Theorem~\ref{thm:single_path_layered}: A Guarantee for routes in $N$-relay Gaussian layered networks}\label{sec:main_results_layered}
This section is dedicated to the proof of Theorem~\ref{thm:single_path_layered}.
Before delving into the proof of the theorem, we update our notation to fit the special class of layered networks.  
\subsection{Updated Notations}
For a $N$-relay Gaussian layered network with $L$ relay layers and $N_L = N/L$ relays per layer, we can decompose the set of nodes $\mcal{V}$ into the subsets $\mcal{V}_l$, $l \in [1:L]$.
$\mcal{V}_l$ is the set of all network nodes that belong to the $l$-th layer. 
Note that $\mcal{V}_0 = \{S\}$ and $\mcal{V}_{L+1}  = \{D\}$.
Similarly, we can decompose the cut $\Omega \subseteq \mcal{V}$ into the disjoint subsets $\Omega^{(l)} = \Omega \cap \mcal{V}_l$.
In a layered network, nodes in layer $l$ receive transmissions only from nodes in the preceding layer ($l-1$).
Therefore, we can use the additional notation to rewrite $\widebar{\C}(\Omega,\mcal{V})$ in \eqref{eq:cut_value} as
\begin{align*}\label{eq:cut_value_layered}
    \widebar{\C}(\Omega,\mcal{V}) &= \log\text{det}\left( \mb{I} + \mb{H}_\Omega {\mb{H}_\Omega}^\dagger \right)\\
    &= \sum_{l=0}^L \underbrace{\log\text{det}\left( \mb{I} + \mb{H}_{\Omega_l} {\mb{H}_{\Omega_l}}^\dagger \right)}_{\widebar{\C}_l(\Omega,\mcal{V})}, \numberthis
\end{align*}
where $\mb{H}_{\Omega_l}$ represents a MIMO channel matrix from nodes in $\Omega_l$ to nodes in $\Omega_{l+1}^c = \mcal{V}_{l+1}\backslash \Omega_{l+1}$.
Additionally, we use $R^{(l)}_{i \to j}$ to denote the capacity of the link connecting the $i-th$ node in layer $l$ to the $j$-th node in the following layer ($l+1$) as follows. 
\[
    R^{(l)}_{i \to j} = R_{\hat{i}\to \hat{j}},\qquad \ \hat{i} \triangleq i + N_L\times (l-1),\ \hat{j} \triangleq j + N_L\times l.
\]
With this additional notation, we now prove Theorem \ref{thm:single_path_layered} in the two following subsections.
    \begin{figure}
        \centering
        \includegraphics[width=0.5\textwidth]{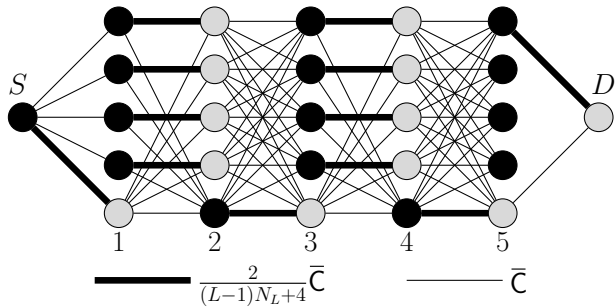}
        \caption{Example network with $N_L=5$ relays per layer and $L=5$ relay layers. Dark nodes represent nodes on the $Source$ side of the cut.}
        \label{fig:converse_odd_layers_relays}
    \end{figure}
\subsection{Tight Examples}
To prove that the bound in Theorem \ref{thm:single_path_layered} is tight (within a constant gap), it suffices to provide example networks where the maximum capacity of any path in the network (by selecting one relay per layer) satisfies \eqref{eq:single_path_layered_ntwk}.
We provide two different constructions based on whether the number of layers $L$ is odd or even.
\begin{itemize}
    \item \underline{Odd $L$}\\
    Consider the example network illustrated in Fig.~\ref{fig:converse_odd_layers_relays} for $L=5$ layers of relays.
    The general construction of the network in Fig.~\ref{fig:converse_odd_layers_relays} for arbitrary odd $L$ and $N_L$ relays per layer is:
    \[
        \begin{aligned}
            &\ R^{(0)}_{S\to i} = R^{(L)}_{N_L\to D}= \widebar{\C}\ \ \quad\quad\quad\quad\quad\quad \forall i \in [1: N_L{-}1] \\
            &\ R^{(0)}_{S\to N_L}= R^{(L)}_{1\to D} = \frac{2}{(L-1)N_L+4} \widebar{\C}\\
            &\ R^{(L)}_{i\to D} = 0\ \ \quad\quad\quad\quad\quad\quad\quad\quad\quad\quad\ \forall i \in [2: N_L{-}1] \\
            l\ \text{odd} & \quad (l \neq L): \\
            &\ R^{(l)}_{i\to i} = \frac{2}{(L-1)N_L + 4} \widebar{\C} \ \ \quad\quad\quad\quad\ \forall i \in [1:N_L{-}1] \\
            &\ R^{(l)}_{i\to j} = 0 \quad\quad\quad\quad \qquad\qquad\qquad \forall i \in [1:N_L{-}1],\ i \neq j \\
            &\ R^{(l)}_{N_L\to i} = R^{(l)}_{i\to N_L} = \widebar{\C}  \qquad\quad\quad\ \ \ \forall i \in [1:N_L] \\
            l\ \text{even} & \quad (l \neq 0):\\
            &\ R^{(l)}_{i\to j} = \widebar{\C} \quad\quad\quad\quad\quad\quad\quad\quad\quad\quad\quad\ \ \forall i,j \in [1:N_L],\ (i,j) \neq (N_L,N_L) \\
            &\ R^{(l)}_{N_L\to N_L} = \frac{2}{(L-1)N_L+4}\widebar{\C}.
        \end{aligned}
    \]

    \begin{figure}
        \centering
        \includegraphics[width=0.7\textwidth]{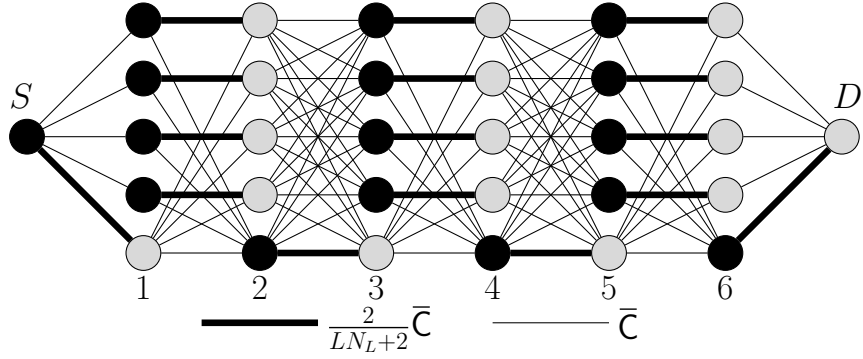}
        \caption{Example network with $N_L=5$ relays per layer and $L=6$ relay layers. Dark nodes represent nodes on the $Source$ side of the cut.}
        \label{fig:converse_even_layers_relays}
    \end{figure}
    It is easy to see that for all cuts except the one highlighted in Fig.~\ref{fig:converse_odd_layers_relays} ($\Omega$ represented by the black nodes), the capacity is greater than or equal $\widebar{\C}$.
    In particular, if any node switches sides (from $\Omega$ to $\Omega^c$ or vice versa), a link of capacity $\widebar{\C}$ would be added to the cut value.
Furthermore, any path from $S$ to $D$ in Fig. \ref{fig:converse_odd_layers_relays} has at least one link with capacity $\frac{2}{(L-1)N_L +4} \widebar{\C}$. Therefore, all routes have capacity of at most $\frac{2}{(L-1)N_L +4} \widebar{\C}$.
    \item \underline{Even $L$}\\
        For even $L$, we consider the network illustrated in Fig.~\ref{fig:converse_even_layers_relays}, which follows the following general construction:
    \[
        \begin{aligned}
            &\ R^{(0)}_{S\to i} = R^{(L)}_{i\to D} = \widebar{\C}\quad\quad\quad\quad\qquad \forall i \in [1: N_L{-}1] \\
            &\ R^{(0)}_{S\to N_L} = R^{(L)}_{N_L\to D} = \dfrac{2}{LN_L+2}\widebar{\C}\\
            l\ \text{odd}& :\\
            &\ R^{(l)}_{i\to i} = \dfrac{2}{LN_L+2}\widebar{\C} \qquad\quad \quad\qquad \forall i \in [1:N_L{-}1] \\
            &\ R^{(l)}_{i\to j} = 0 \quad\quad\quad\quad \qquad\qquad\qquad \forall i \in [1:N_L{-}1],\ i \neq j \\
            &\ R^{(l)}_{N_L\to i} = R^{(l)}_{i\to N_L} = \widebar{\C}  \qquad\quad\quad\ \ \ \forall i \in [1:N_L] \\
            l\ \text{even} & \quad (l \neq 0,\ l \neq L):\\
            &\ R^{(l)}_{i\to j} = \widebar{\C} \quad\quad\quad\quad\quad\quad\quad\quad\quad\quad\quad\ \ \forall i,j \in [1:N_L],\ (i,j) \neq (N_L,N_L) \\
            &\ R^{(l)}_{N_L\to N_L} = \frac{2}{LN_L+2}\widebar{\C}.
        \end{aligned}
    \]
    Similar to the case for odd $L$, the highlighted cut ($\Omega$ represented by the black nodes) is the minimum cut, since it avoids all links with capacity $\widebar{\C}$.
    For the cut illustrated in Fig.~\ref{fig:converse_even_layers_relays}, all paths from $S$ to $D$ include at least one link belonging to the highlighted cut. Therefore any path from $S$ to $D$ has a capacity of at most $\frac{2}{LN_L+2} \widebar{\C}$.

\end{itemize}
\subsection{Proof of Lower Bound in Theorem \ref{thm:single_path_layered}}
The proof of Theorem \ref{thm:single_path_layered} starts by applying the result in Theorem \ref{thm:mimo_theorem} for MIMO channel with i.i.d inputs for $k_t = k_r = 1$.
We apply this on the components of $\widebar{\C}(\Omega,\mcal{V})$ in \eqref{eq:cut_value_layered} as follows
\begin{align*}\label{eq:cut_upperbound_layered}
    \widebar{\C} &\leq \widebar{\C}(\Omega,\mcal{V}) = \sum_{l=0}^L \widebar{\C}_l(\Omega,\mcal{V}) ,\qquad& \forall \Omega \subseteq \mcal{V} \\
    &\stackrel{(a)}\leq \sum_{l=0}^L \left\{\min(|\Omega_l|,|\Omega_{l+1}^c|) \left[ \max_{i \in \Omega_l,\ j \in \Omega_{l+1}^c} R^{(l)}_{i \to j} + \log \left(|\Omega_l||\Omega_{l+1}^c|\right)\right]\right\} ,\qquad& \forall \Omega \subseteq \mcal{V} \\
&\stackrel{(b)}\leq \sum_{l=0}^L \left\{\min(|\Omega_l|,|\Omega_{l+1}^c|) \left[ \max_{i \in \Omega_l,\ j \in \Omega_{l+1}^c} R^{(l)}_{i \to j} + 2\log \left(N_L\right)\right]\right\} ,\qquad& \forall \Omega \subseteq \mcal{V} \\
    &\leq \left[\sum_{l=0}^L \min(|\Omega_l|,|\Omega_{l+1}^c|)\right]\left( \max_{l \in[0:L]} \max_{i \in \Omega_l,\ j \in \Omega_{l+1}^c} R^{(l)}_{i \to j} + 2\log \left(N_L\right)\right) ,\qquad& \forall \Omega \subseteq \mcal{V}\\
&\stackrel{(c)}= \overbrace{\left[\sum_{l=0}^L \min(|\Omega_l|,|\Omega_{l+1}^c|)\right]}^{T(\Omega)}
\left(\max_{i \in \Omega,\ j \in \Omega^c} R_{i \to j} + 2\log \left(N_L\right)\right),\qquad& \forall \Omega \subseteq \mcal{V}.
\numberthis
\end{align*}
where: $(a)$ follows by applying Theorem \ref{thm:mimo_theorem} to each term in the summation; $(b)$ follows from the fact that $|\Omega_l|,\ |\Omega_{l+1}| \leq N_L, \forall l \in [1:L]$;
the relation $(c)$ follows from the fact that $R_{i \to j} = 0$ forall $i,j$ that do not belong to successive layers.

At this point, the expression \eqref{eq:cut_upperbound_layered} looks similar to \eqref{eq:cut_upperbound_general} in our proof for Theorem \ref{thm:single_path}.
Using similar contradiction arguments as in the proof of Theorem \ref{thm:single_path} in the previous section, we can prove that there exists a path $\mcal{P}$ such that
\begin{align}\label{eq:layered_pre_final_step}
    \C_{\mcal{P}} \geq \frac{1}{\displaystyle\max_{\Omega \subseteq \mcal{V}} T(\Omega)}\widebar{\C} - 2 \log(N_L).
\end{align}
Our final step is to get an upper bound on $T(\Omega)$.
This is done through the following Property.
\begin{property}
    For a layered network with $L$ relay layers and $N_L$ relays per layer, define $T(\Omega)$ for a particular cut $\Omega$ as:
    \begin{equation}
        T(\Omega) \triangleq \sum_{l=0}^{L} \min(\vert \Omega_l\vert,\vert \Omega_{l+1}^c \vert).
        \label{eq:additive_terms}
    \end{equation}
    Then we have,
        \begin{align*}
        \max_{\Omega \subseteq \mcal{V}} T(\Omega) \leq T_{max}(L) =
        \begin{cases}
            \frac{(L-1)N_L +4}{2}, & \quad L\ \text{odd} \\
            \frac{LN_L + 2}{2} , & \quad L\ \text{even}
            \end{cases}
        \end{align*}
 \label{prpty:minimalcuts_N2}
\end{property}
\begin{proof}
        Assuming $L$ is odd, we can rewrite \eqref{eq:additive_terms} as
        \begin{align*}
        T(\Omega) = \sum_{l=0}^L \min (|\Omega_l|,|\Omega_{l+1}^c|) &= \min (|\Omega_0|,|\Omega_{1}^c|) + \sum_{l=1}^{L-1} \min (|\Omega_l|,|\Omega_{l+1}^c|) + \min (|\Omega_L|,|\Omega_{L+1}^c|) \\
        &\leq 2 +\sum_{l=1}^{\frac{L-1}{2}} \min (|\Omega_{2l}|,|\Omega_{2l+1}^c|) + \sum_{l=1}^{\frac{L-1}{2}} \min (|\Omega_{2l+1}|,|\Omega_{2l+2}^c|)\\
        &\leq 2 +\sum_{l=1}^{\frac{L-1}{2}} |\Omega_{2l+1}^c| + \sum_{l=1}^{\frac{L-1}{2}} |\Omega_{2l+1}| = 2 + \frac{(L-1)N_L}{2}.
        \end{align*}

        For an even $L$, the result follows similarly as follows
        \begin{align*}
        T(\Omega) = \sum_{l=0}^L \min (|\Omega_l|,|\Omega_{l+1}^c|) &= \min (|\Omega_0|,|\Omega_{1}^c|) + \sum_{l=1}^{L} \min (|\Omega_l|,|\Omega_{l+1}^c|) \\
        &\leq 1 +\sum_{l=1}^{\frac{L}{2}} \min (|\Omega_{2l}|,|\Omega_{2l+1}^c|) + \sum_{l=1}^{\frac{L}{2}} \min (|\Omega_{2l+1}|,|\Omega_{2l+2}^c|)\\
        &\leq 1 +\sum_{l=1}^{\frac{L}{2}} |\Omega_{2l+1}^c| + \sum_{l=1}^{\frac{L}{2}} |\Omega_{2l+1}| = 1 + \frac{LN_L}{2}.
        \end{align*}
\end{proof}
By using Property~\ref{prpty:minimalcuts_N2} on \eqref{eq:layered_pre_final_step}, we can get
        \begin{align}
            \C_\mcal{P} \geq \begin{cases}
                \dfrac{2}{(L-1)N_L + 4} \widebar{\C} - 2\log(N_L)  , & \quad L\ \text{odd} \\
                \dfrac{2}{LN_L+2} \widebar{\C} - 2\log(N_L) , & \quad L\ \text{even}.
            \end{cases}
        \end{align}
        which is exactly the lower bound in Theorem~\ref{thm:single_path_layered}.
        This concludes our proof.

\section{Conclusion}\label{sec:conclusion}
In this paper we have proved that in a wireless network with $N$ relays, multi-hop routing along the best path can only guarantee a fraction of the approximate network capacity that scales inverse linearly with the number of nodes in the network -  within a constant additive gap that depends only on the number of relays.
    This is a surprising result, as initial results~\cite{NOF_Simplification} showed that for diamond networks (where the relays can only communicate with the source and destination but not among themselves), the guarantee on the capacity of the best route is a constant fraction (1/2) of the approximate capacity of the full network - with a constant additive gap depending on the number of nodes $N$, i.e., as the number of nodes increases, the multiplicative factor remains constant while the additive factor changes.
    Here, we show that for a network that allows communication among the relays, the capacity achieved by a route (although increases with the number of relays in the network due to path-diversity) is only guaranteed a fraction of the physical layer cooperation approximate capacity that is not longer constant but decreases as the number of relays $N$ increases (i.e., in this case, the multiplicative factor drops with $N$ as well).

The key idea in our approach was to view the minimum cut in a network as a MIMO channel where a route between the source and destination amounts to selecting a single link in this MIMO channel.
Based on the MIMO subchannel selection result in Theorem \ref{thm:mimo_theorem}, we characterized the guarantee on the capacity of a multi-hop route in the network in terms of the dimensions of the minimum cut.


\begin{appendices}
    \section{Proof of Lemma \ref{lem:incremental_mimo}}
    \label{appendix:mimo_proof}
    Throughout this section, for any polynomial $g(x)$, we define $[x^j]g(x)$ to be the coefficient of $x^j$ in the polynomial $g(x)$.
    To prove Lemma~\ref{lem:incremental_mimo}, our arguments use properties of principal submatrices of Hermitian matrices, most notably the following. 
\begin{property}\label{property_submatrices}
    Let $\mb{A}$ be an $n \times n$ Hermitian matrix.
    For a subset $\Lambda \subseteq [1:n]$, define $\mb{A}_\Lambda$ to be a principal submatrix of $\mb{A}$, constructed only from the rows and columns of $\mb{A}$ indexed by $\Lambda$.
    Denote with $\rho(\lambda)$ and $\rho_\Lambda(\lambda)$ the characteristic polynomials of $\mb{A}$ and $\mb{A}_\Lambda$, respectively.
    Then the following property holds:
    \begin{equation}
        \label{rho_2_rhosub_gn_char}
        (n-k)!\sum_{\substack{\Lambda \subseteq [1:n],\\|\Lambda|=k}}\rho_{\Lambda}(\lambda)= \rho^{(n-k)}(\lambda),
    \end{equation}
    where: (i) the summations in \eqref{rho_2_rhosub_gn_char} are over all subsets of $[1:n]$ of cardinality $k$;
    (ii) $f^{(j)}(x)$ is the $j$-th derivative of $f(x)$ with respect to $x$.
\end{property}
Property \ref{property_submatrices} is mentioned in \cite{thompson} as a well-known fact.
For completeness, we include a proof of the property at the end of this appendix.
The proof of the property is based on the multilinearity of the determinant of a matrix in terms of its rows.
This property provides us with a key ingredient to the proof of Theorem \ref{thm:mimo_theorem}.
In particular, we are interested in comparing coefficients of $\lambda^0$ in \eqref{rho_2_rhosub_gn_char}.
Let $\{\lambda_i\}_{i=1}^n$ be the set of eigenvalues of the Hermitian matrix $\mb{A}$. 
Then by comparing the coefficients of $\lambda^0$ in \eqref{rho_2_rhosub_gn_char}, we get
\begin{align}
    \label{rho_2_rhosub_gn}
    & [\lambda^0] \left( (n-k)!\sum_{\substack{\Lambda \subseteq [1:n],\\|\Lambda|=k}}\rho_{\Lambda}(\lambda) \right) = [\lambda^0]\rho^{(n-k)}(\lambda) \nonumber \\
    \implies\quad & (n-k)!\sum_{\substack{\Lambda \subseteq [1:n],\\|\Lambda|=k}}[\lambda^0] \rho_{\Lambda}(\lambda) = [\lambda^0]\rho^{(n-k)}(\lambda) \nonumber\\
    \implies\quad & (n-k)!\sum_{\substack{\Lambda \subseteq [1:n],\\|\Lambda|=k}}[\lambda^0] \rho_{\Lambda}(\lambda) =(n-k)![\lambda^{n-k}]\rho(\lambda) \nonumber\\
    \stackrel{(a)}\implies\quad & (n-k)!\sum_{\substack{\Lambda \subseteq [1:n],\\|\Lambda|=k}} (-1)^k \left| [\lambda^0] \rho_{\Lambda}(\lambda)\right| =(n-k)!  (-1)^k \sum_{\substack{\Lambda \subseteq [1:n],\\|\Lambda| = k }} \prod_{j \in \Lambda} \lambda_{j} \nonumber\\
    \implies\quad &\ \quad\qquad \sum_{\substack{\Lambda \subseteq [1:n],\\|\Lambda|=k}}\left| [\lambda^0] \rho_{\Lambda}(\lambda)\right| =  \sum_{\substack{\Lambda \subseteq [1:n],\\|\Lambda| = k }} \prod_{j \in \Lambda} \lambda_{j},
\end{align}
where in $(a)$ the RHS follows from the fact that $\rho(\lambda) = \prod_{i=1}^n(\lambda - \lambda_i)$.
Relation \eqref{rho_2_rhosub_gn} will be the main ingredient in the proofs of Lemma~\ref{lem:incremental_mimo}
as we see in the following.

\medskip

For the channel matrix $\mb{H} \in \mbb{C}^{n_r \times n_t}$ where $n_t \leq n_r$, let $\mathbf{F} = \mathbf{I} + \mathbf{H}\mathbf{H}^\dagger$ and define $\lambda_1 \geq \lambda_2 \geq \dots \geq \lambda_{n_r}$ to be the eigenvalues of $\mathbf{F}$.
Without loss of generality, we assume that each transmitter transmits unity power.
This is because we can rewrite $\C_{n_t,n_r} = \log\det(\mathbf{I} + P\mathbf{H}\mathbf{H}^\dagger)$ as $\log\det(\mathbf{I} + \tilde{\mathbf{H}}\tilde{\mathbf{H}}^\dagger)$ where $\tilde{\mathbf{H}} = \sqrt{P}\mathbf{H}$.
Since we have $n_t \leq n_r$, there exists at most $n_t$ eigenvalues of $\mathbf{F} = \mathbf{I} + \mathbf{H}\mathbf{H}^\dagger$ that are not equal to unity, i.e.,  $\lambda_i = 1$ for $i \in [ n_t+1: n_r]$.
We now appeal to Property \ref{property_submatrices} of characteristic polynomials of submatrices.
 Let $\rho(\lambda)$ and $\rho_\Lambda(\lambda)$ be the characteristic polynomials of $\mb{F}$ and $\mb{F}_\Lambda$, respectively. Here $\mb{F}_\Lambda$ denotes the submatrix of $\mb{F}$ constructed from the rows and columns indexed by $\Lambda$.
For our purposes, $n$ and $k$ in \eqref{rho_2_rhosub_gn} are replaced with $n_r$ and $k_r$, respectively to give the following:
\begin{align}
    \sum_{\substack{\Lambda \subseteq [1:n_r],\\|\Lambda|=k_r}}|[\lambda^0]\rho_{\Lambda}(\lambda)| = \sum_{\substack{\Lambda \subseteq [1:n_r],\\|\Lambda| = k_r }} \prod_{j \in \Lambda} \lambda_{j}.
    \label{eq:relation_edited}
\end{align}
Using \eqref{eq:relation_edited}, we can now prove the two cases in Lemma~\ref{lem:incremental_mimo}.

\noindent \textit{\textbf{Case 1: ($k_t = n_t,\ k_r \leq n_t \leq n_r$)}}\\
The expression in \eqref{eq:relation_edited} can be simplified when $k_r \leq n_t$ as follows:
\begin{align}\label{const_gap_case_1}
    \sum_{\substack{\Lambda \subseteq [1:n_r],\\|\Lambda|=k_r}}\left| [\lambda^0]\rho_{\Lambda}(\lambda)\right| &= \sum_{\substack{\Lambda \subseteq [1:n_r],\\|\Lambda| = k_r }} \prod_{j \in \Lambda} \lambda_{j} \nonumber\\
    \stackrel{(a)}{\geq}& {n_t \choose k_r} \sum_{\substack{\Lambda \subseteq [1:n_t],\\|\Lambda| = k_r }} \frac{1}{{n_t \choose k_r}}\prod_{j \in \Lambda} \lambda_{j} \nonumber \\
    \stackrel{(b)}{\geq}& {n_t \choose k_r} \prod_{\substack{\Lambda \subseteq [1:n_t],\\|\Lambda| = k_r }} \left( \prod_{j \in \Lambda} \lambda_{j} \right)^{{n_t \choose k_r}^{-1}}\nonumber\\
    =& {n_t \choose k_r}\left(\prod_{i=1}^{n_t}\lambda_{i}\right)^{{n_t-1 \choose k_r-1}{n_t \choose k_r}^{-1}}\nonumber\\
    \stackrel{(c)}{=}& {n_t \choose k_r}\left(\prod_{i=1}^{n_r}\lambda_{i}\right)^{\frac{k_r}{n_t}},
\end{align}
where: (a) follows by considering only $k_r$-tuples of the eigenvalues $\lambda_i$ where $i \in [1:n_t]$ -
since $[1:n_t] \subseteq [1:n_r]$, then all $k_r$-tuples from $[1:n_t]$ are contained within the summation in \eqref{eq:relation_edited} and therefore the relation follows;
relation (b) follows from the AM-GM inequality; 
(c) follows by the simplification of the exponent and the fact that $\lambda_i = 1$ for $i \in \{n_t+1,\dots n_r\}$.

By averaging the left hand side of \eqref{const_gap_case_1}, we have
\begin{align*}
    \dfrac{1}{{n_r \choose k_r}}  \sum_{\substack{\Lambda \subseteq [1:n_r],\\|\Lambda|=k_r}}\left| [\lambda^0]\rho_{\Lambda}(\lambda) \right| \geq \frac{{n_t \choose k_r}}{{n_r \choose k_r}} \left(\mathlarger{\prod}_{i=1}^{n_r}\lambda_{i}\right)^{\frac{k_r}{n_t}},
\end{align*}
This implies that there exists some selection $\Lambda_s$ of $k_r$ receivers such that if we take $\mb{H}_{\Lambda_s}$, the submatrix of $\mb{H}$ that selects only rows indexed by $\Lambda_s$, then the matrix $\mathbf{B}_s = \mathbf{I} + \mathbf{H}_{\Lambda_s}{\mathbf{H}_{\Lambda_s}}^\dagger$ satisfies
\[
    \begin{aligned}
        \log\det(\mathbf{B}_s) = \log \left(\left| [\lambda^0]\rho_{\Lambda_s}(\lambda)\right| \right) \geq& \log \left(\frac{{n_t \choose k_r}}{{n_r \choose k_r}} \left(\mathlarger{\prod}_{i=1}^{n_r}\lambda_{i}\right)^{\frac{k_r}{n_t}} \right) \\
        =& \frac{k_r}{n_t}\log\det(\mathbf{F}) - \log\left(\frac{{n_r \choose k_r}}{{n_t \choose k_r}}\right).
    \end{aligned}
\]
Since $\C_{n_t,n_r} = \log\det\left(\mb{I} + \mb{H}\mb{H}^\dagger\right) = \log\det\left(\mb{F}\right)$ and $\C^\star_{n_t,k_r} \geq \log\det(\mb{B}_s)$, then we have that
\[
    \C^\star_{n_t,k_r} \geq \frac{k_r}{n_t} \C_{n_t,n_r} - \log\left( \frac{{n_r \choose k_r}}{{n_t \choose k_r}}  \right).
\]
\noindent \textit{\textbf{Case 2: (}$k_t = n_t$, $n_t \leq k_r \leq n_r$\textbf{)}}

Since $k_r \geq n_t$, then there exist sets $\Lambda \subseteq [1:n_r]$ with cardinality $k_r$ such that 
$[1:n_t] \subseteq \Lambda \subseteq [1:n_r]$.
Therefore, we can get a lower bound from \eqref{eq:relation_edited}  as
\begin{align} \label{const_gap_case_2_LB}
    \dfrac{1}{{n_r \choose k_r}}\sum_{\substack{\Lambda \subseteq [1:n_r],\\|\Lambda|=k_r}}\left| [\lambda^0]\rho_{\Lambda}(\lambda)\right| &= \dfrac{1}{{n_r \choose k_r}}\sum_{\substack{\Lambda \subseteq [1:n_r],\\|\Lambda| = k_r }} \prod_{j \in \Lambda} \lambda_{j}\nonumber\\
    &\geq \dfrac{1}{{n_r \choose k_r}} \sum_{\substack{[1:n_t] \subseteq \Lambda \subseteq [1:n_r],\\|\Lambda| = k_r }}\hspace{0.1in} \prod_{j \in \Lambda} \lambda_{j}\nonumber\\
&\stackrel{(a)}{\geq} \dfrac{1}{{n_r \choose k_r}} \sum_{\substack{[1:n_t] \subseteq \Lambda \subseteq [1:n_r],\\|\Lambda| = k_r }}\hspace{0.1in} \prod_{j=1}^{n_t} \lambda_{j} \nonumber\\
&\stackrel{(b)}{=} \dfrac{{n_r-n_t \choose k_r-n_t}}{{n_r \choose k_r}} \left(\prod_{j=1}^{n_t} \lambda_j \right) = \dfrac{{n_r-n_t \choose k_r-n_t}}{{n_r \choose k_r}} \left(\prod_{j=1}^{n_r} \lambda_j \right),
\end{align}
where: (a) follows since $\forall i \in [1:n_r]$, we have $\lambda_i \geq 1$;
relation (b) follows since there are are ${n_r-n_t \choose k_r-n_t}$ sets $\Lambda$ such that $|\Lambda| = k_r$ and $[1:n_t] \subseteq \Lambda \subseteq [1:n_r]$.

The average relation in  \eqref{const_gap_case_2_LB} implies that there exists a selection $\Lambda_s$ of $k_r$-receivers such that if we take $\mb{H}_{\Lambda_s}$, the submatrix of $\mb{H}$ that selects only rows indexed by $\Lambda_s$, then the matrix $\mathbf{B}_s = \mathbf{I} + \mathbf{H}_{\Lambda_s}{\mathbf{H}_{\Lambda_s}}^\dagger$ satisfies
\[
    \log\det(\mathbf{B}_s) = \log \left(\left| [\lambda^0]\rho_{\Lambda_s}(\lambda)\right| \right) \geq \log\det(\mathbf{F}) - \log\left(\frac{{n_r \choose k_r}}{{n_r-n_t \choose k_r-n_t}}\right).
\]
Since $\C_{n_t,n_r} = \log\det\left(\mb{I} + \mb{H}\mb{H}^\dagger\right) = \log\det\left(\mb{F}\right)$ and $\C^\star_{n_t,k_r} \geq \log\det(\mb{B}_s)$, then we have that
, then by choosing $k_r$ receivers we have 
\[
    \C^\star_{n_t,k_r} \geq \C_{n_t,n_r} - \log\left( \frac{{n_r \choose k_r}}{{n_r-n_t \choose k_r-n_t}}  \right).
\]
However, fundamentally we know that $\C^\star_{n_t,k_r} \leq \C_{n_t,n_r}$, therefore, we have:
\[
    \C_{n_t,n_r} \geq \C^\star_{n_t,k_r} \geq  \C_{n_t,n_r} - \log\left( \frac{{n_r \choose k_r}}{{n_r-n_t \choose k_r-n_t}}  \right).
\]

\noindent This concludes the proof of Lemma \ref{lem:incremental_mimo}.

As seen above, our proof relied heavily on Property~\ref{property_submatrices} which we prove next.

\subsection{Proof of Property \ref{property_submatrices}}
\noindent Let $\rho(\lambda)$ denote the characteristic polynomial of the Hermitian matrix $\mathbf{A} \in \mbb{C}^{n \times n}$.
The characteristic polynomial $\rho(\lambda)$ is the determinant of the matrix $(\lambda \mathbf{I} - \mathbf{A})$ and is therefore, 
multilinear in the rows of the matrix $\lambda \mathbf{I} - \mathbf{A}$.
Thus we can write $\rho(\Lambda)$ as 
\[
    \rho(\lambda) = M\big(r_1(\lambda),r_2(\lambda), ... r_n(\lambda)\big),
\]
where $M: \mathbb{C}^n \times \mathbb{C}^n \cdots \times \mathbb{C}^n \rightarrow \mathbb{R}$ is a multilinear mapping and $r_i(\lambda)$ is the $i$-th row of $\mathbf{A}$.
Note that for any multilinear function $M$, the total derivative is the sum of its partial derivatives, i.e.,
\[
    \mbb{D} M(x_1,x_2,..,x_n)(y_1,y_2,..,y_n) = \sum_{i=1}^n M(x_1,...,y_i,..,x_n).
\]
Therefore, by applying the chain rule, we have
\[
    \begin{aligned}
        \rho^{(1)}(\lambda) &=  \mbb{D}M(r_1(\lambda),r_2(\lambda),\cdots,r_n(\lambda))(r_1^{(1)}(\lambda),r_2^{(1)}(\lambda),..,r_n^{(1)}(\lambda))\\
        &= \sum_{i=1}^n M\big(r_1(\lambda), ... , r^{(1)}_i(\lambda), ... r_n(\lambda)\big),
    \end{aligned}
\]
where $r^{(1)}_i(\lambda)$ is the differentiation of the $i$-th row of $\mathbf{A}$ with respect to $\lambda$.
Note that $r^{(1)}(\lambda) = 0$ at all non-diagonal positions and equals 1 at the diagonal position.
Thus, $M\big(r_1(\lambda), ... , r^{(1)}_i(\lambda), ... r_n(\lambda)\big)$ is the determinant of the matrix $\lambda \mathbf{I} - \mathbf{A}$ after replacing the $i$-th row by $r'_i(\lambda)$.
Expanding the determinant along the $i$-th row of this new matrix, we get that
\[
    M\big(r_1(\lambda), ... , r^{(1)}_i(\lambda), ... r_n(\lambda)\big) = 1\times(\lambda \mathbf{I}-\mathbf{A})_{ii},
\]
where $(\lambda \mathbf{I}-\mathbf{A})_{ii}$ is the minor of $\lambda \mathbf{I}-\mathbf{A}$ formed by removing the $i$-th row and $i$-th column, which is equal to $\det(\lambda \mathbf{I} + \mathbf{A}_{[1:n]\setminus i})$.
$\mathbf{A}_{[1:n]\setminus i}$ is the submatrix of $\mathbf{A}$ by removing the $i$-th row and $i$-th column.
As a result, we have that
\begin{equation}
    \rho^{(1)}(\lambda) = \sum_{i=1}^n \det(\lambda \mathbf{I} + \mathbf{A}_{[1:n]\setminus i}) = \sum_{i=1}^n \rho_{[1:n]\setminus i}(\lambda),
    \label{rho_2_rhosub}
\end{equation}
where $\rho_{[1:n]\setminus i}(\lambda)$ denotes the characteristic polynomial of $\mathbf{A}_{[1:n]\setminus i}$ and $i \in \{1,2,\cdots n\}$.
\\
We can now use \eqref{rho_2_rhosub} in addition to an induction argument to prove the relation in \eqref{rho_2_rhosub_gn}.
Let $g_{k+1}(\lambda)$ be the sum of all characteristic equations of $k+1 \times k+1$ principal submatrices of $\mb{A}$, i.e.,
\[
    g_{k+1}(\lambda) = \sum_{\substack{\Lambda \subseteq [1:n],\\|\Lambda|=k+1}} \rho_{\Lambda}(\lambda).
\]
Taking the derivative of $g_{k+1}(\lambda)$ and applying \eqref{rho_2_rhosub}, we get
\begin{equation}
    g^{(1)}_{k+1}(\lambda) =  \sum_ {\substack{\Lambda \subseteq [1:n],\\|\Lambda|=k+1}} \rho^{(1)}_{\Lambda}(\lambda)
    = \sum_{\substack{\Lambda \subseteq [1:n],\\|\Lambda|=k+1}} \sum_{j \in \Lambda} \rho_{\Lambda\setminus j}(\lambda),
    \label{diff_k_plus_1}
\end{equation}
where $\rho_{\Lambda\setminus j}(\lambda)$ is the characteristic polynomial of the $k\times k$ principal submatrix of $\mathbf{A}$ with rows and columns in $\Lambda\setminus j$.
Note that the summation on the RHS of \eqref{diff_k_plus_1} contains ${n \choose k+1}(k+1)$ terms.
Since there are only ${n \choose k}$ submatrices of size $k \times k$, the summation in \eqref{diff_k_plus_1} is bound to have repeated terms. 
It is not hard to verify that 
\[
    {n \choose k+1}(k+1) = {n \choose k}(n-k).
\]
As a result, from the symmetry of the summation in \eqref{diff_k_plus_1}, we can rewrite the expression as
\begin{equation}
    \sum_ {\substack{\Lambda \subseteq [1:n]\\|\Lambda|=k+1}} \rho^{(1)}_{\Lambda}(\lambda) = (n-k) \sum_ {\substack{\Lambda \subseteq [1:n]\\|\Lambda|=k}} \rho_{\Lambda}(\lambda),
    \label{induction_step}
\end{equation}
Equation \eqref{induction_step} serves as our induction hypothesis.
Our base case is what we proved in \eqref{rho_2_rhosub} which can be deduced from \eqref{induction_step} by choosing $k = n-1$.
Therefore, by applying the induction step \eqref{induction_step} at each step, we get
\begin{align*}
   & \rho^{(1)}_{\Lambda}(\lambda) = \sum_ {\substack{\Lambda \subseteq [1:n]\\|\Lambda|=n-1}} \rho_{\Lambda}(\lambda),\\
   \implies & \rho^{(2)}_{\Lambda}(\lambda) = \sum_{\substack{\Lambda \subseteq [1:n]\\|\Lambda|=n-1}} \rho^{(1)}_{\Lambda}(\lambda) \stackrel{\eqref{induction_step}}= 2 \sum_{\substack{\Lambda \subseteq [1:n]\\|\Lambda|=n-2}} \rho_{\Lambda}(\lambda),\\
   \implies & \rho^{(3)}_{\Lambda}(\lambda) = 2 \sum_{\substack{\Lambda \subseteq [1:n]\\|\Lambda|=n-2}} \rho^{(1)}_{\Lambda}(\lambda) \stackrel{\eqref{induction_step}}=3\cdot 2 \sum_{\substack{\Lambda \subseteq [1:n]\\|\Lambda|=n-3}} \rho_{\Lambda}(\lambda),\\
   & \qquad \qquad \qquad \qquad \vdots \\
   \implies & \rho^{(n-k)}_{\Lambda}(\lambda) =  (n-k)! \sum_{\substack{\Lambda \subseteq [1:n]\\|\Lambda|=k}} \rho_{\Lambda}(\lambda).
\end{align*}
The concludes the proof of Property~\ref{property_submatrices}.

\section{Proof of Lemma \ref{zero_gap_thm}}\label{sec:proof_zero_gap}

To prove the lower bound in Lemma \ref{zero_gap_thm}, it suffices to prove the statement for the following two incremental cases:
        \begin{equation}\label{eq:nogap_case1}
        \text{1) For } k_t = n_t,k_r = n_r-1, \C_{n_t,n_r-1}^\star \geq \frac{n_r-1}{n_r} \C_{n_t,n_r},\qquad\qquad\qquad
        \end{equation}
        \begin{equation}\label{eq:nogap_case2}
        \text{2) For } k_t = n_t-1,k_r = n_r, \C_{n_t-1,n_r}^\star \geq \frac{n_t-1}{n_t} \C_{n_t,n_r}.\qquad\qquad\qquad
        \end{equation}
Using the two statements in \eqref{eq:nogap_case1} and \eqref{eq:nogap_case2}, we can reduce an $n_t \times n_r$ system to a $k_t \times k_r$ system as follows: We first remove one receiver antenna to create an $n_t \times (n_r - 1)$ system such that its capacity $\C_{n_t, n_r - 1}^\star \geq \frac{n_r - 1}{n_r}\C_{n_t,n_r} $. From this (particular) $n_t \times (n_r - 1)$ system, we select an $n_t \times (n_r - 2)$ system such that its capacity $\C_{n_t, n_r - 2}^\star \geq \frac{n_r - 2}{n_r -1}\C_{n_t,n_r-1}^\star$, and so on, till we prune the system down to a $n_t \times k_r$ system. We then repeat the above process for transmitter selection on the $n_t \times k_r$ system to prune it progressively to a $k_t \times k_r$ system with capacity $C_{k_t,k_r}^\star$. The result would then follow as
\begin{align*}
\C^\star_{k_t,k_r} \geq& \quad \dfrac{k_t}{k_t + 1} \C^\star_{k_t+1,k_r}\\
			 \geq& \quad \dfrac{k_t}{k_t + 1} \dfrac{k_t+1}{k_t + 2}\C^\star_{k_t+2,k_r}\\
			 \geq& \quad \dfrac{k_t}{k_t + 1} \dfrac{k_t+1}{k_t + 2} .. \dfrac{n_t-1}{n_t}\ \C^\star_{n_t,k_r}\\
             \geq& \quad \dfrac{k_t}{k_t + 1} \dfrac{k_t+1}{k_t + 2} .. \dfrac{n_t-1}{n_t}\dfrac{k_r}{k_r+1}\ \C^\star_{n_t,k_r+1}\\
             \geq& \quad \dfrac{k_t}{n_t}\dfrac{k_r}{k_r+1} .. \dfrac{n_r-1}{n_r}\ \C_{n_t,n_r}\\			 			 			
             \geq& \quad \dfrac{k_t k_r}{n_t n_r}\ \C_{n_t,n_r}.			
\end{align*}
Without loss of generality, we assume that each transmitter transmits unity power.
This is because we can rewrite $\C_{n_t,n_r} = \log\det(\mathbf{I} + P\mathbf{H}\mathbf{H}^\dagger)$ as $\log\det(\mathbf{I} + \tilde{\mathbf{H}}\tilde{\mathbf{H}}^\dagger)$ where $\tilde{\mathbf{H}} = \sqrt{P}\mathbf{H}$, thus proving the Lemma for $P \neq 1$ is equivalent to proving it for $\tilde{\mathbf{H}}$ instead of $\mathbf{H}$. 
We now prove the two cases in \eqref{eq:nogap_case1} and \eqref{eq:nogap_case2}.

\noindent \textit{\textbf{Case 1: ($k_t = n_t$, $k_r = n_r-1$)}}\\
Let $\mathbf{F} = \mathbf{I}+\mathbf{H}\mathbf{H}^\dagger$ and denote its characteristic polynomial by $\rho(\lambda)$. The capacity can then be written as $\C = \log\det(\mathbf{F})$.
We define $ \mathbf{H}_i$ to be the submatrix of $\mathbf{H}$ constructed by dropping the $i$-th receiver antenna ($i$-th row in $\mathbf{H}$).
Let $\mathbf{B}_i = \mathbf{I} + \mathbf{H}_{i} \mathbf{H}^\dagger_{i}$. Therefore, $\C_i = \log\det(\mathbf{I} + \mathbf{H}_{i} \mathbf{H}^\dagger_{i}) = \log\det(\mathbf{B}_i)$ is the capacity of the MIMO channel with the remaining $n_r{-} 1$ receiver antennas.

Again, we use \eqref{eq:relation_edited}, where we substitute $n = n_r$ and $k = n_r-1$. As a result, we get
\begin{align}
    \frac{1}{n_r} \sum_{\substack{\Lambda \subseteq [1:n_r],\\|\Lambda|=n_r-1}}\left| [\lambda^0] \rho_{\Lambda}(\lambda)\right| &= \frac{1}{n_r} \sum_{\substack{\Lambda \subseteq [1:n_r],\\|\Lambda| = n_r-1 }} \prod_{j \in \Lambda} \lambda_{j}\nonumber\\
    &\stackrel{(a)}\geq  \prod_{\substack{\Lambda \subseteq [1:n_r],\\|\Lambda| = n_r-1 }} \left(\prod_{j \in \Lambda} \lambda_{j}\right)^{\frac{1}{n_r}}\nonumber\\
&=  \left(\prod_{i=1}^{n_r} \lambda\right)^{\frac{n_r-1}{n_r}},
\label{eq:gap_free}
\end{align}
where $(a)$ follows from the AM-GM inequality.
Since the LHS of \eqref{eq:gap_free} is a mean over all $\Lambda \subseteq [1:n_r]$ s.t. $|\Lambda| = n_r -1$, then this implies that there exists $\Lambda_s$ such that
\[
    \left| [\lambda^0] \rho_{\Lambda}(\lambda)\right| \geq \left(\prod_{i=1}^{n_r} \lambda\right)^{\frac{n_r-1}{n_r}}.
\]
Let $s = [1:n_r]\backslash\Lambda_s$, then we have 
\[
    \log\det(\mb{B}_s) = \log \left(\left| [\lambda^0] \rho_{\Lambda}(\lambda)\right|\right) \geq \frac{n_r-1}{n_r}\log\left(\prod_{i=1}^{n_r} \lambda\right) = \log\text{det}(\mb{F}),
\]
which implies that 
\[
    \C^\star_{n_t,n_r-1} \geq  \C_s  \geq \frac{n_r-1}{n_r}\log \det\left(\mathbf{F}\right).
\]
Since $\C_{n_t,n_r} = \log\det\left( \mathbf{F} \right)$, we have
\[
    \C^\star_{n_t,n_r-1} \geq \dfrac{n_r-1}{n_r}\ \C_{n_t,n_r}.
\]
This concludes the proof for the first case.
\\

\noindent \textit{\textbf{Case 2: ($k_t = n_t-1$, $k_r = n_r$)}}\\
To prove this case, we use Sylvester's determinant theorem that states that
\[
    \C_{n_t,n_r} = \log\det(\mathbf{I}_{n_r} + \mathbf{H} \mathbf{H}^{\dagger}) = \log\det(\mathbf{I}_{n_t} + \mathbf{H}^{\dagger}\mathbf{H}).
\]
Let $\hat{\mathbf{F}} = \mathbf{I}_{n_t} + \mathbf{H}^\dagger \mathbf{H}$, and therefore, $\C_{n_t,n_r} = \log\det(\hat{\mathbf{F}})$.
We denote by ${\mathbf{H}^\dagger}_{j}$, the submatrix of $\mathbf{H}^\dagger$ after dropping the $j$-th row.
The capacity of this MIMO subchannel can also be written by Sylvester's theorem as $\C_j = \log\det\left(\mathbf{I}_{n_t} + {\mathbf{H}^\dagger}_{j} ({\mathbf{H}^\dagger}_{j})^\dagger  \right) = \log\det(\hat{\mathbf{B}}_j)$ where $\hat{\mathbf{B}}_j$ is the $(n_t -1) \times (n_t -1)$ matrix constructed from $\hat{\mathbf{F}}$ after removing the $j$-th column and row.
The argument to prove the ratio $\frac{n_t-1}{n_t}$ thus follows similarly with $\hat{\mb{B}}_j$ and $\hat{\mb{F}}$ as in Case 1 with $\mathbf{B}_i$ and $\mathbf{F}$.

This concludes the proof of the lower bound in Lemma~\ref{zero_gap_thm}.

\bigskip

\noindent {\bf Tight Example:}
To prove that the lower bound in Theorem \ref{zero_gap_thm} is tight, consider the $n_t \times n_r$ MIMO channel described by $\mb{H} =\sqrt{P} \mb{O}_{n_r,n_t}$ where $\mb{O}_{n_r,n_t}$ is a $n_r \times n_t$ matrix with all entries equal to unity. It is not hard to see that for the described channel,
\[
    \C = \log\det\left(\mb{I} +P \mb{O}_{n_r,n_t}\mb{O}_{n_r,n_t}^\dagger \right) = \log(1+P n_t n_r).
\]
Similarly for any subchannel of size $k_t \times k_r$, the capacity is $\C_{k_t,k_r} = \log(1+P k_t k_r)$.
Note that for $x \approx 0$, we have $\log(1+x) \approx \frac{1}{\ln(2)}x$. Therefore for $P \approx 0$, we get that $\C_{n_t,n_r} \approx \frac{1}{\ln(2)} P n_t n_r$ and similarly $\C_{k_t,k_r} \approx  \frac{1}{\ln(2)} P k_t k_r$. 
Therefore for $P \approx 0$,
\[
    \frac{\C_{k_t,k_r}}{\C} \approx \frac{k_t k_r}{n_t n_r}.
\]
\noindent This concludes our proof of Lemma \ref{zero_gap_thm}.


\end{appendices}

\bibliographystyle{IEEEtran}
\bibliography{references}

\end{document}